\newtheorem{theorem}{Theorem}
\newtheorem{lemma}{Lemma}
\newtheorem{corollary}{Corollary}
\newtheorem{proposition}{Proposition}
\newcommand{\BibTeX}{\rm B\kern-.05em{\sc i\kern-.025em b}\kern-.08em\TeX}
\newcommand{\zero}{|0\rangle}
\newcommand{\one}{|1\rangle}
\newcommand{\x}{|x\rangle}
\newcommand{\hn}{\accentset{\circ}{n}}
\newcommand{\hist}{\text{\rm \textbf{hist}}}
\newcommand{\hhist}{\widehat{\hist}}
\newcommand{\thist}{\overline{\hist}}
\newcommand{\MOV}{\text{\rm MoV}}
\newcommand{\llceil}{\left\lceil}
\newcommand{\rrceil}{\right\rceil}
\newcommand{\lnorm}{\left|\left|}
\newcommand{\rnorm}{\right|\right|}
\newcommand{\qps}{\rm QPS}
\newcommand{\tone}{s}
\newcommand{\vv}{\boldsymbol{V}}
\newcommand{\prc}{\Pr\left[\text{\rm correct}\right]}
\newcommand{\nwi}{n_{\text{\rm win}}}
\newcommand{\nlo}{n_{\text{\rm lose}}}
\newcommand{\cp}{{\text{\rm Cap}}}
\title{Accelerating Voting by Quantum Computation}
\author[1]{\href{mailto:<aoliu.cs@gmail.com>?Subject=Your UAI 2023 paper}{Ao Liu}{}}
\author[1]{Qishen Han}
\author[1]{Lirong Xia}
\author[2]{Nengkun Yu}
\affil[1]{%
    Department of Computer Science\\
    Rensselaer Polytechnic Institute\\
    Troy, NY, USA
}
\affil[2]{%
    Department of Computer Science\\
    Stony Brook University\\
    Stony Brook, NY, USA
}
\begin{document}

\maketitle

\begin{abstract}
Studying the computational complexity and designing fast algorithms for determining winners under voting rules are classical and fundamental questions in computational social choice. In this paper, we accelerate voting by leveraging quantum computation: we propose a quantum-accelerated voting algorithm that can be applied to any anonymous voting rule. We  show that our algorithm can be quadratically faster than any classical algorithm (based on sampling with replacement) under a wide range of common voting rules, including positional scoring rules, Copeland, and single transferable voting (STV). Precisely, our quantum-accelerated voting algorithm outputs the correct winner with high probability in  $\Theta\left(\frac{n}{\MOV}\right)$ time, where $n$ is the number of votes and $\MOV$ is {\em margin of victory}, the smallest number of voters to change the winner. In contrast, any classical voting algorithm based on sampling with replacement requires  $\Omega\left(\frac{n^2}{\MOV^2}\right)$ time under a large class of voting rules. Our theoretical results are supported by experiments under plurality, Borda, Copeland, and STV. 

% In this paper, we propose a quantum algorithm for majority voting. We prove that the runtime of quantum majority voting could be quadratically smaller than its classical algorithms, which is supported by our experimental studies.
%Voting has been largely applied in complex systems to make high-frequency and large-scale decisions. These scenarios create requirements for fast voting algorithms where small errors are acceptable. Existing voting algorithms are usually at least in linear time and cannot satisfy the fast computing requirements. \lirong{I suggest removing this paragraph.}
\end{abstract}

%\linenumbers

\section{Introduction}
Driven by the critical public need  of revolutionalizing modern democratic systems~\citep{mancini2015time,brill2018interactive}, voting has been   widely applied in many collective decision-making scenarios beyond political elections. Examples include search engines~\citep{dwork2001rank}, crowdsourcing~\citep{mao2013better}, database management~\citep{Belardinelli_2019}, and blockchain governance~\citep{grossi2022social}, just to name a few. In such large-scale, high-frequency collective decision-making scenarios, it is desirable that the winner is computed as soon as possible, ideally in sub-linear time in the number of votes, and perhaps at a (small) cost of its correctness. In fact, the study of computational complexity and algorithmic aspects of voting rules has been a key topic of {\em computational social choice}~\citep{brandt2016handbook}.

One natural approach is to randomly sample a subset of votes (with or without replacement) and compute the winner of the sampled votes. The idea can be dated back to Venetian elections in the 13th century~\citep{walsh2012lot} and has recently attracted much attention from the computational social choice community~\citep{Bhattacharyya2021sample,Flanigan2020sortition,flanigan2021fair}. However, the performance of a sampling algorithm is restricted by the number of samples needed to guarantee a certain level of correctness, which determines its runtime. Specifically, the runtime of a sampling algorithm would be quadratically related to the number of votes (see Table~\ref{tab:summary1}). Is there a faster algorithm, for example, sub-linear to the number of votes, that still preserves a high probability of correctness? %\lirong{mention the bound in previous work, otherwise people do not know what is the baseline we compare to.}

Quantum computation appears to be a promising approach, as it has successfully accelerated many computational tasks such as  search~\citep{10.1145/237814.237866}, optimization~\citep{hogg2000quantum}, and machine learning~\citep{PhysRevA.94.022308,AJAGEKAR2020107119,AJAGEKAR2021117628}. However, we are not aware of previous work on accelerating voting using quantum computation. Thus, the following problem remains open. 
\begin{center}
    \textbf{Can voting be accelerated by quantum computation?}
\end{center}

% \lirong{say that by ``accelated'', we want to design sublinear algorithm.}
We address this question with YES both theoretically and experimentally. We accelerate voting by designing a sub-linear quantum-accelerated voting algorithm where a small probability of ``errors'' is allowed, which outperforms the classical sampling-based voting algorithms.
Our contributions are three-fold. First, we propose the quantum-accelerated voting algorithm (Algorithm~\ref{alg:quantum_multi}). Our algorithm leverages the widely-used techniques of quantum counting~\citep{Brassard1998counting} to generate the histogram of the votes. The simple architecture guarantees that our algorithm will be easy to implement in the future. Second, we theoretically prove that our algorithm is quadratically faster than any classical sampling-based algorithm for many common voting rules, including positional scoring rule, STV, Copeland, and maximin (see Table~\ref{tab:summary1}). Third, we experimentally verify our theoretical results under the plurality, Borda, Copeland, and STV (Section~\ref{sec:exp}). In Section~\ref{sec:heuristic}, we provide heuristics that may further improve the performance of quantum-accelerated voting. 

\begin{table}[htp]
    \centering
    %\resizebox{.7\textwidth}{!}{
    \begin{tabular}{cccc}
    \toprule
         & Runtime & Space Requirement\\ \hline
        {Quantum} (Thm.~\ref{thm:gsr}) &  {$\Theta\left(\frac{n}{\MOV}\right)$} & {$\Theta\left(\log(\frac{n}{\MOV})\right)$}\\
        Classical (Thm.~\ref{thm:classical}) & $\Omega\left(\frac{n^2}{\MOV^2}\right)$ & $\Omega\left(\log(\frac{n^2}{\MOV^2})\right)$\\
    \bottomrule
    \end{tabular}%}
    \caption{Summary for the theoretical results, where $n$ is the number of votes, and $\MOV$ is the margin of victory. All results in the table assume a constant error rate (\emph{e.g.,} $1\%$).}\label{tab:summary1}
\end{table}

Our quantum-accelerated voting algorithm accelerates voting most significantly in the case where the margin of victory is sub-linear in $n$, \emph{e.g.,}  $\MOV = \Theta(n^c)$ with $c \in (0, 1)$. In this case, $\MOV$ is relatively small compared with $n$, and a $\Theta (\frac{n}{\MOV})$ acceleration from classical to quantum algorithm is significant. The ratio $\frac{n}{\MOV} = \Theta ( n^{1 -c})$ implies that the algorithm is sub-linear. See Section~\ref{sec:compare} for detailed discussions. %a detailed explanation of when quantum computation accelerates.

% \lirong{IMPORTNAT: add more discussions about our alg. Under which conditions is it good? Think about it, the current version of the paper never explicitly address the question. We still don't know if the alg is sub linear time.}
 
\noindent\textbf{Related works and discussions. } To the best of our knowledge, our work is the first to use quantum computation to accelerate voting. \citet{vaccaro2007quantum}introduced the idea of quantum  computation to voting. The quantum voting algorithm in \citet{vaccaro2007quantum} provides security guarantees (against colluding attacks~\citep{lian2009handbook}). \citet{xue2017simple} improved the result in~\citet{vaccaro2007quantum} by proposing a simpler voting protocol but with stronger security guarantees. \citet{khabiboulline2021efficient} focuses on achieving anonymity without losing security guarantees. However, all approaches above require $\Omega(n)$ quantum communication cost and thus take $\Omega(n)$ time.%\lirong{modified. is it correct?}

There is a large literature on efficient (classical) algorithms for the winner determination problem. \citet{Wang2019PUT} purposed fast algorithms to compute winners in ranked pairs and STV under parallel-universes tiebreaking~\citep{conitzer2009preference}, which is known to be NP-complete. Various papers have shown that the winner of Dodgson rule, while is NP-hard to compute in the worst case~\citep{bartholdi1989voting}, can be efficiently computed with high probability when the ranking is generated $i.i.d.$~\citep{mccabe2008approximability, homan2009guarantees} and under semi-random models~\citep{xia2022beyond}. This line of work focuses on designing algorithms for NP-hard winner determination problems, while our paper focuses on further accelerating voting whose winner determination problem is in P. %Most of these works focus on providing efficient algorithms for the NP-hard winner determination problem, instead of fast voting. 

\section{Preliminaries}\label{sec:prelim}
\textbf{VOTING.}\\
In voting, $n > 1$ voters cast their votes on $m > 1$ candidates. The candidates are denoted as $c_1,\cdots,c_m$. A vote represents a voter's preference towards the candidates, which is a full-ranking (linear order) over candidates. Since there are $m!$ types of full rankings for $m$ candidates, a vote can be represented as an $m!$-dimensional unit vector. That means if the vote is the $j$-th type, the $j$-th dimension of the vector is $1$, and all other dimensions are $0$'s. The vote of the $i$-th voter is denoted as $\vv_i = \left(V_{i,1},\cdots,V_{i,m!}\right)$. For example, if the $i$-th vote is the $j$-th type, we have that $\vv_{i,j} = 1$ and $V_{i,j'} = 0$ for all $j'\neq j$. A profile $P$ is a collection of $n$ voters' rankings. A voting rule $r$ is a mapping from the profile $P$ to the winner(s) among $m$ candidates. 

A histogram $\hist$ is an $m!$ dimension vector that records the number of each type of ranking in the profile. We use $r(\hist)$ to denote the winner under an anonymous voting rule $r$ and a profile with histogram $\hist$. In this paper, we assume that the voting rule $r$ satisfies {\em anonymity} and {\em canceling out}~\citep{ao2020private}. An anonymous voting rule selects the winner only based on the histogram of the profile and does not depend on the identity of the voter. A voting rule satisfies canceling out if the winner does not change after adding one copy of each ranking to the profile. Most common voting rules satisfy both anonymity and canceling-out, such as plurality, Borda, Copeland, and STV.

\noindent\textbf{Plurality.} The candidate ranked top in the most number of votes is chosen as the winner. 

\noindent\textbf{Borda.} Firstly, the rule calculates the Borda score of each candidate. A candidate ranked $i$-th in a vote gains a score of $(m-i)$ from that vote. For example, the Borda scores for the vote $[c_1 \succ c_2 \succ c_3 \succ c_4]$ are $\{c_1:3,\, c_2:2,\, c_3:1,\, c_4:0\}$. The Borda score of a candidate is the sum of scores it gains from each vote. Then, the candidate with the largest Borda score is chosen as the winner. %For Borda, we also set the number of candidates $m = 2$ (Figure~\ref{fig:bor_m_2} in Appendix~\ref{app:exp}) and $m=4$ (Figure~\ref{fig:bor_m_4} in Appendix~\ref{app:exp}).

\textbf{Copeland. } Copeland compares each pair of candidates, where the winner gets $1$ point, and the loser gets $0$ point. If two candidates are tied, both get $0.5$ points. After finishing all pairwise comparisons, the candidate receiving the most points is chosen as the winner.

\textbf{Single transferable vote (STV). } STV is an $(m-1)$-round voting rule. In each round, the candidate receiving the least number of top-ranked votes is eliminated. When all $(m-1)$ rounds are finished, the remaining candidate is chosen as the winner.

\textbf{Margin of victory.} \emph{Margin of victory} (MoV) is the smallest number $k$ such that there exist a set of $k$ voters who can change the winner by voting differently.\\

\textbf{QUANTUM COMPUTATION. }\\
%For the quantum computation, we introduce the technique of {\em quantum counting}~\citep{Brassard1998counting}, which is applied in our quantum-accelerated voting algorithm. An introduction to quantum computing and implementation can be found in Appendix~\ref{apx:basic}. 
\noindent\textbf{Quantum counting algorithm\footnote{This paper adopts the same notation system as~\citet{nielsen2002quantum}, which is a textbook about quantum computation.}~\citep{Brassard1998counting}}. Quantum counting is one of the key parts of our proposed quantum-accelerated voting algorithm. It counts the number of solutions to a search problem. Given a binary function $f: \{0,1, \cdots, 2^t -1\} \to \{0, 1\}$, the quantum counting circuit for $f$ computes the number of $x \in \{0,1, \cdots, 2^t -1\}$ such that $f(x) = 1$. A quantum counting circuit uses $t$ quantum bits to encode the function and $\tone$ quantum bits to calculate and record the output. More specifically, supposing $n_1$ is the number of $x$ such that $f(x) = 1$. The quantum counting circuit with $t + \tone$ quantum bits outputs a binary decimal $\hat{\varphi} = 0.b_1b_2\cdots b_s$, which estimates $\varphi = \arcsin\left(\sqrt{n_1\cdot 2^{-t}}\right)/\pi$. For example, binary decimal $0.011$ represents $(2^{-2} + 2^{-3}) = 3/8$. %An introduction to quantum computing and implementation can be found in Appendix~\ref{apx:basic}. 
In the next lemma, we present three useful properties of quantum counting. 

\begin{lemma}[Properties of quantum counting]\label{lem:q_count_biased}
The quantum counting algorithm has the following three properties\\
1 (Tail bound, Inequality (5.34) in \citep{nielsen2002quantum}). For any $\delta > 2^{-\tone}$,
\begin{equation}\nonumber
\Pr[|\hat{\varphi} - \varphi| \geq 2^{-s} + \delta] \leq \frac{1}{2(\delta\cdot 2^{\tone}-1)}.
\end{equation}
2. Its runtime is $\Theta(2^\tone)$.\\
3. Its space requirement is $(t+\tone)$ quantum bits (qubits).
\end{lemma}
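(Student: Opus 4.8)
The plan is to obtain all three properties from a single observation: the quantum counting circuit is quantum phase estimation applied to the Grover iterate. Property~1 then follows from the standard phase-estimation tail bound (which is exactly Eq.~(5.34) of~\citet{nielsen2002quantum}, after a rescaling), and Properties~2 and~3 follow by reading off the resources of that circuit.

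For Property~1 I would first recall the spectral picture. The Grover iterate $G$ built from $f$ preserves the two-dimensional subspace spanned by the uniform superpositions over $\{x:f(x)=1\}$ and over $\{x:f(x)=0\}$, and acts there as a rotation by $2\theta$ with $\sin^2\theta = n_1 2^{-t}$; hence the relevant eigenvalues of $G$ are $e^{\pm 2\pi i\varphi}$ with $\varphi = \theta/\pi = \arcsin(\sqrt{n_1 2^{-t}})/\pi$. Running phase estimation with an $s$-qubit clock register on $G$, initialized at the uniform superposition (which lies in this subspace), therefore produces a measurement outcome $m\in\{0,\dots,2^s-1\}$ whose rescaling $\hat\varphi = m/2^s$ estimates $\varphi$ (or $1-\varphi$, from which $n_1 = 2^t\sin^2(\pi\hat\varphi)$ is recovered equally well). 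The remaining step is a transcription of the textbook bound: if $b\in\{0,\dots,2^s-1\}$ denotes the best $s$-bit approximation of $\varphi$, then $\Pr[\,|m-b|>e\,]\le\tfrac{1}{2(e-1)}$ for each integer $e\ge1$; combining this with the triangle inequality $|\hat\varphi-\varphi|\le |m-b|\,2^{-s} + |b\,2^{-s}-\varphi| \le |m-b|\,2^{-s} + 2^{-s}$ and the substitution $\delta = e\,2^{-s}$ (under which the hypothesis $\delta>2^{-s}$ becomes $e>1$) yields $\Pr[\,|\hat\varphi-\varphi|\ge 2^{-s}+\delta\,]\le\tfrac{1}{2(\delta 2^s-1)}$.

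For Properties~2 and~3 it suffices to count the circuit's resources. The circuit uses an $s$-qubit clock register together with a $t$-qubit register on which the $f$-oracle, and hence $G$, acts, for a total of $t+s$ qubits, which is Property~3. For Property~2, phase estimation applies the controlled powers $G^{2^j}$ for $j=0,\dots,s-1$, costing $\sum_{j=0}^{s-1}2^j = 2^s-1$ applications of $G$ (each itself a fixed number of oracle calls and gates), followed by an inverse quantum Fourier transform on $s$ qubits at cost $O(s^2)$; since $2^s$ dominates $s^2$, the total runtime is $O(2^s)$, while the single block $G^{2^{s-1}}$ already forces $\Omega(2^s)$, so it is $\Theta(2^s)$.

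The only genuinely delicate point, and the step I would be most careful about, is the bookkeeping in the last line of Property~1: matching the lemma's additive ``$+\,2^{-s}$'' slack to the ``best $s$-bit approximation'' convention behind Eq.~(5.34), handling the wrap-around case where $\varphi$ lies near $0$ or $1$ (so that $b$ and the distance $|m-b|$ must be read cyclically modulo $2^s$), and reconciling the strict/non-strict inequalities together with the possible non-integrality of $e=\delta 2^s$ at the boundary $\delta=2^{-s}$ (where both sides of the claimed bound become vacuous, so no information is lost). Everything else — the spectral setup and Properties~2 and~3 — is routine.
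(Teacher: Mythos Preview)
Your proposal is correct and matches the paper's treatment. The paper does not give a self-contained proof of this lemma: Property~1 is quoted directly as ``Inequality~(5.34) in \citep{nielsen2002quantum},'' while Properties~2 and~3 are justified only implicitly via the circuit in Appendix~\ref{apx:quantum} (Figure~\ref{fig:qc2}), which is exactly the phase-estimation-on-Grover architecture you describe --- $s$ clock qubits, $t$ target qubits, controlled powers $G^{2^0},\dots,G^{2^{s-1}}$ summing to $2^s-1$ Grover applications, followed by $QFT^\dagger$. Your derivation supplies the bookkeeping the paper elides; there is no meaningful difference in approach.
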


%The number of quantum bits $\tone$ affects both the precision and the runtime of the quantum counting algorithm. 
The first property in Lemma~\ref{lem:q_count_biased} says that a larger $\tone$ provides a stronger theoretical guarantee on the accuracy of quantum counting. However, a larger $\tone$ also corresponds to longer runtime. The detailed implementation of quantum counting and the reasoning behind why it accelerates can be found in Appendix~\ref{apx:quantum}. 

\noindent\textbf{Applying quantum counting to voting.}
We apply the quantum counting algorithm to estimate the histogram of a profile. For any $j\in\{1,\cdots,m!\}$, we estimate the $j$-th type of votes by setting the following binary function:
\begin{equation}\label{equ:fj}
f_j(x) = \left\{
\begin{array}{ll}
1 &  
\begin{array}{ll}\text{if candidate } x\text{'s vote}\\
\text{is of the } j \text{-th type}
\end{array}\\
& \\
0 & \text{otherwise} 
\end{array}\right..
\end{equation}
The number of $x$'s with  $f_j(x) = 1$ is  the number of votes of $j$-th type, \emph{i.e.},  $\hist_j$. The histogram of the votes is generated by enumerating all the $j$'s. We assume that the information in $\hist_j$ is stored in quantum RAM, which means the information can be efficiently encoded into quantum circuits ~\citep{giovannetti2008arch,giovannetti2008quantum,park2019circuit}.

\section{Quantum-Accelerated Voting Algorithm}
\label{sec:proto}

\noindent\textbf{Formal definition of quantum-accelerated voting. } We formally define quantum-accelerated voting in Algorithm~\ref{alg:quantum_multi}. In classical voting, votes are usually sent to an ``aggregator'', who is responsible for aggregating the votes and announcing the winner. Our quantum-accelerated voting follows a similar procedure, where the ``aggregator'' uses an algorithm accelerated by quantum computation. Basically, Algorithm~\ref{alg:quantum_multi} repeats the quantum counting algorithm by $K$ rounds. In each round, quantum counting estimates the histogram of the profile $\hist$ and applies the voting rule to the estimated histogram $\hhist$ to compute an estimated winner $c^{(k)} = r(\hhist)$. Then, the classical plurality voting rule is used to aggregate the estimated winners of $K$ rounds, \emph{i.e.} $c^{(1)},\cdots,c^{(K)}$. That is: the ``aggregator'' announces the candidate who is the estimated winner of the most rounds as the (final) winner. A tie-breaking rule (\emph{e.g.,} random tie-breaking) is applied when there are multiple candidates with the most rounds as the estimated winner. In Section~\ref{sec:theo_quantum}, we will show that Algorthm~\ref{alg:quantum_multi} (a combination of quantum and classical algorithms) has a better runtime than the quantum algorithm only. %Next, we will introduce each step of Algorithm~\ref{alg:quantum_multi} in detail.

\begin{algorithm}[htp]
\caption{Quantum-Accelerated Voting Algorithm}\label{alg:quantum_multi}
\begin{algorithmic}[1]
\STATE {\bfseries Inputs:} $n$ voters' votes $\vv_0, \cdots, \vv_{n-1}$, a voting rule $r$, number of iteration $K$, and the number of qubits $\tone \geq 2$
\STATE {\bfseries Initialization:} Construct the binary functions $f_1, f_2, \cdots, f_{m!}$ based on $\vv_0, \cdots, \vv_{n-1}$
\FOR{$k\in\{1,\cdots,K\}$}
\STATE Initialize an $m!$-dimensional vector $\hhist$
\FOR{$j\in\{1,\cdots,m!\}$}
\STATE Construct and apply quantum counting circuit for $f_j$ with $\tone$ qubits. Denote the output of the quantum counting circuit as $0.b_1\cdots b_{\tone}$.
\STATE {Set the $j$-th component of $\hhist$ as $2^t\cdot \sin^2\left(\pi\cdot 0.b_1\cdots b_{\tone}\right)$}
\ENDFOR
\STATE Set $c^{(k)} = r(\hhist)$ as the winner of the $k$-th round
\ENDFOR 
%\STATE Aggregate $c^{(1)},\cdots,c^{(K)}$ using (classical) plurality
\STATE {\bfseries Output} the (classical) plurality winner of $c^{(1)},\cdots,c^{(K)}$. %aggregation result as the winner
\end{algorithmic}
\end{algorithm}

\noindent\textbf{Construct binary functions.} 
We construct a binary function $f_j$ to store the information about the $j$-th type of vote. $f_j(x)=1$ if and only if the $x$-th voter casts the $j$-th type of vote. Formally, $f_j:\{0,1,\cdots, 2^t -1\} \to \{0, 1\}$ is defined according to Equation (\ref{equ:fj}),  
where $t = \lceil \log n \rceil$ is the number of bits to encode $n$, and the voters are numbered from $0$ to $(n-1)$. For $x \in \left\{ n, \cdots,  2^{t}-1\right\}$, we just set $f_j(x) = 0$ for padding. Then, the number of $x$ such that $f_j(x) = 1$ is exactly the number of votes of the $j$-th type, \emph{i.e.} $\hist_j$. 

\noindent\textbf{Count the histogram.}
For each type $j$, a quantum counting circuit is constructed and used to estimate $\hist_j$. The output of the counting circuit is $\hat{\varphi} = 0.b_1b_2\cdots b_{\tone}$, which estimates $\varphi = \arcsin\left(\sqrt{\hist_j\cdot 2^{-t}}\right)/\pi$. Therefore, $\hhist_j = 2^t\cdot \sin^2\left(\pi\cdot 0.b_1\cdots b_{\tone}\right)$ estimates $\hist_j$. By enumerating all types $j$, we get $\hhist$, which estimates the histogram $\hist$.

\noindent\textbf{Decide the winner.} The quantum counting procedure (and the voting rule) runs for $K$ rounds. Each round's estimated winner is computed according to the estimated histogram $\hhist$. Finally, the quantum-accelerated voting algorithm outputs the candidate that wins in the most number of rounds.

\section{Theoretical Analysis of Quantum-Accelerated Voting}
\label{sec:theo_quantum}
%In Section~\ref{sec:proto}, we proposed our quantum-accelerated voting algorithm (Algorithm~\ref{alg:quantum_multi}). 
In this section, we  provide theoretical guarantees about Algorithm~\ref{alg:quantum_multi}'s $\prc$ \footnote{Throughout this paper, we use $\prc$ to represent the probability (for an algorithm) to output the correct winner. Formally, $\prc \triangleq \Pr[\text{the algorithm's output} = r(P)]$, where $P$ is the voting profile and $r$ is the voting rule.}, runtime, and space requirements. %Theorem~\ref{thm:gsr} shows the theoretical bounds of Algorithm~\ref{alg:quantum_multi}. The bounds depend on three factors: the number of votes $n$, the margin of victory $\MOV$, and the error probability $\varepsilon$.
In our analysis, the number of candidates $m$ is fixed.  Figure~\ref{fig:thm_chain} illustrates a roadmap of the results in this section. %shows how the theorems are logically related. 

\begin{figure}[ht]
    \centering
    \includegraphics[width = 0.48\textwidth]{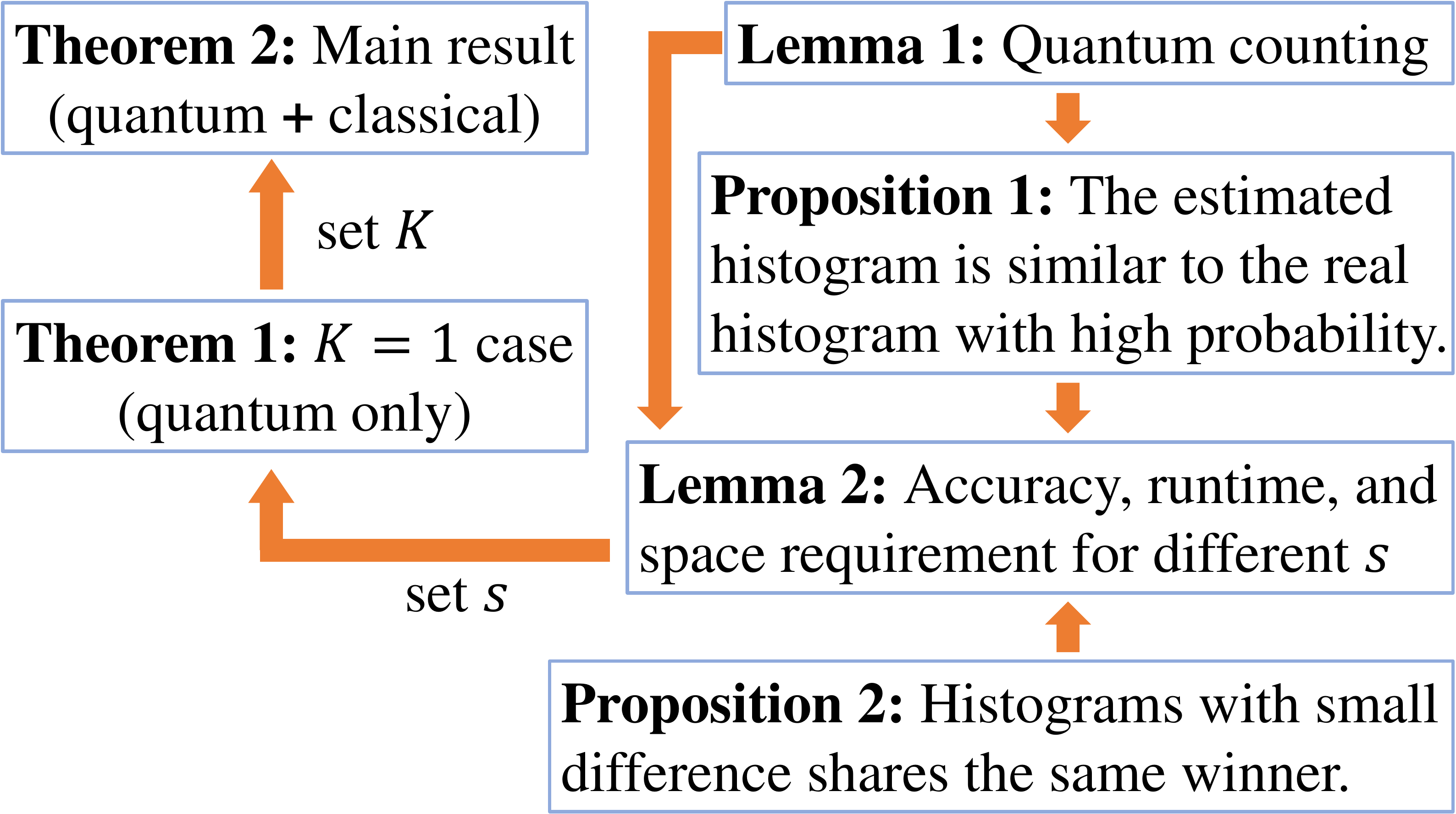}
    \caption{The logical chain of the theorems, lemmas, and propositions about Algorithm~\ref{alg:quantum_multi}. For example, the arrow from Theorem~\ref{lem:gsr_K=1} to Theorem~\ref{thm:gsr} represents that Theorem~\ref{thm:gsr} is proved based on Theorem~\ref{lem:gsr_K=1}.}
    \label{fig:thm_chain}
\end{figure}

In Theorem~\ref{lem:gsr_K=1}, we show the theoretical guarantee of Algorithm~\ref{alg:quantum_multi} when $K=1$. Note that when $K = 1$, the outer for loop in Algorithm~\ref{alg:quantum_multi} only runs for one round, which means the classical plurality voting rule directly outputs $c^{(1)}$. In other words, Algorithm~\ref{alg:quantum_multi} only contains quantum counting (and the voting rule $r$) when $K=1$. To simplify notations, we define a function $\sigma(\cdot)$ as follows.
{\small
\begin{equation}\nonumber
\begin{split}
\sigma(\varepsilon) &= 2+ \llceil t+ \log\Big(\frac{m!}{2\varepsilon}+2\Big) +\log \big(\pi\cdot(m!)\big) - \log(\MOV) \rrceil\\
&= \Theta\left(\log\big(\frac{n}{\varepsilon\cdot\MOV}\big)\right). 
\end{split}
\end{equation}
}
Throughout this paper, the logarithm function $\log(\cdot)$ represents $\log_2(\cdot)$ and $\ln(\cdot)$ represents $\log_e(\cdot)$.

\begin{theorem}[Theoretical guarantee of Algorithm~\ref{alg:quantum_multi} when $K=1$]
\label{lem:gsr_K=1}
For any given $\varepsilon\in (0,1) $, Algorithm~\ref{alg:quantum_multi} with $K=1$ and $\tone = \sigma(\varepsilon)$ has the following three properties. \\
1. Its $\prc \geq 1-\varepsilon$.\\
2. Its runtime is $\Theta\left(\frac{n}{\varepsilon \cdot \MOV}\right)$.\\
3. Its space requirement is  $\Theta\left(\log(\frac{n}{\varepsilon\cdot\MOV})\right)$.
% For arbitrary constant $\varepsilon\in (0,1) $, quantum voting (Algorithm~\ref{alg:quantum_multi}) has the following three properties when $K=1$ and $\tone = \max\{2,\, M+1\}$ with $M=$
% $$\llceil t + \log \left(\frac{\pi\left(\frac{m!}{2\varepsilon}+2\right)(m!)((m!+1)n + (m!)\MOV)}{n\cdot \MOV}\right)\rrceil.$$\\
% 1. It outputs the correct voting outcome with at least $1-\varepsilon$ probability.\\
% 2. Its runtime is $\Theta\left(\frac{n}{\varepsilon \cdot \MOV}\right)$.\\
% 3. Its space requirement is  $\Theta\left(\log(\frac{n}{\varepsilon\cdot\MOV})\right)$.
\end{theorem}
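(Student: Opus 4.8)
The plan is to prove the three claims in turn; claim~1 (the bound on $\prc$) is the substantive one, while claims~2 and~3 fall out of Lemma~\ref{lem:q_count_biased} once the qubit count is fixed. When $K=1$, Algorithm~\ref{alg:quantum_multi} just runs $m!$ independent quantum counting circuits, one per ranking type $j$, each with $\tone=\sigma(\varepsilon)$ qubits, records $\hhist_j = 2^t\sin^2(\pi\cdot 0.b_1\cdots b_\tone)$, and outputs $r(\hhist)$; hence $\prc = \Pr[\,r(\hhist)=r(\hist)\,]$, and it suffices to show that with probability at least $1-\varepsilon$ the estimate $\hhist$ is close enough to $\hist$ that the winner cannot change. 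I would first control a single coordinate: applying part~1 of Lemma~\ref{lem:q_count_biased} to the $j$-th circuit with the choice $\delta = 2^{-\tone}\big(\tfrac{m!}{2\varepsilon}+1\big)$ (which exceeds $2^{-\tone}$) gives $\Pr\!\big[\,|\hat\varphi-\varphi|\ge 2^{-\tone}(\tfrac{m!}{2\varepsilon}+2)\,\big]\le \varepsilon/m!$, and then pushing this through $\hhist_j = 2^t\sin^2(\pi\hat\varphi)$ via the elementary bound $|\sin^2 a-\sin^2 b| = |\sin(a+b)\sin(a-b)|\le |a-b|$ yields $|\hhist_j-\hist_j|\le \pi\,2^t\,|\hat\varphi-\varphi| < \pi\,2^t\,2^{-\tone}(\tfrac{m!}{2\varepsilon}+2)$ on that event.

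The definition of $\sigma(\varepsilon)$ is calibrated precisely so that $\tone = \sigma(\varepsilon)\ge 2 + t + \log\!\big(\tfrac{m!}{2\varepsilon}+2\big) + \log(\pi\, m!) - \log\MOV$, which makes the right-hand side of the last inequality at most $\MOV/(4\,m!)$. Taking a union bound over the $m!$ types, with probability at least $1-\varepsilon$ we have $\|\hhist-\hist\|_1 \le m!\cdot\MOV/(4\,m!) = \MOV/4 < \MOV$. The remaining — and, I expect, hardest — step is the structural claim that a histogram perturbation of $\ell_1$ size below $\MOV$ cannot change the winner, i.e. $\|\hhist-\hist\|_1<\MOV \Rightarrow r(\hhist)=r(\hist)$. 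The subtlety is that $\hhist$ is real-valued and need not sum to $n$, whereas $\MOV$ is defined through integer, total-preserving vote changes. I would handle this by splitting $\hhist-\hist$ into its nonnegative parts $\delta^{+}$ and $\delta^{-}$ (with $\delta^{-}_j\le\hist_j$ since $\hhist_j\ge 0$), pairing the ``deleted'' mass $\delta^{-}$ against the ``added'' mass $\delta^{+}$ to form at most $\tfrac12\|\hhist-\hist\|_1<\MOV$ vote-changes together with a leftover block of pure insertions or deletions, and absorbing that leftover using anonymity and canceling-out; since strictly fewer than $\MOV$ votes are effectively changed, the winner is preserved. If instead a preliminary technical lemma is available stating that for anonymous canceling-out rules — in particular for positional scoring rules, Copeland, maximin and STV — inserting or deleting a unit of vote mass is no more disruptive than a vote-change, one can simply invoke it; the factor-$\tfrac14$ slack built into the bound above leaves room for any constant lost in such an argument. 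Combining these facts gives $\prc\ge 1-\varepsilon$.

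For claim~2, a $K=1$ run performs one outer iteration and $m!$ inner iterations, each a quantum counting circuit of runtime $\Theta(2^{\tone})$ by part~2 of Lemma~\ref{lem:q_count_biased}, plus an $O(\mathrm{poly}(m))=\Theta(1)$ evaluation of $r(\hhist)$; since $m$ is fixed and $2^{t}=\Theta(n)$, we get $2^{\sigma(\varepsilon)} = \Theta\!\big(2^{t}\cdot\tfrac{m!}{\varepsilon}\cdot\tfrac1{\MOV}\big) = \Theta\!\big(\tfrac{n}{\varepsilon\MOV}\big)$, so the total runtime is $\Theta\!\big(\tfrac{n}{\varepsilon\MOV}\big)$. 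For claim~3, the $m!$ circuits can be run sequentially on the same $t+\tone$ qubits (reading off each classical estimate before reusing them), and $\hhist$ needs only $m!$ classical words of $O(\log n)$ bits, so the space requirement is $\Theta(t+\tone)=\Theta\!\big(\log n + \sigma(\varepsilon)\big)=\Theta\!\big(\log\tfrac{n}{\varepsilon\MOV}\big)$ by part~3 of Lemma~\ref{lem:q_count_biased} together with the stated order of magnitude of $\sigma(\varepsilon)$.
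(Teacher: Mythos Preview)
Your proposal is correct and follows essentially the same route as the paper: bound each coordinate error $|\hhist_j-\hist_j|$ via Lemma~\ref{lem:q_count_biased} with the same choice $\delta=2^{-\tone}\bigl(\tfrac{m!}{2\varepsilon}+1\bigr)$, union-bound over the $m!$ ranking types, then argue that an $\ell_1$-small histogram perturbation cannot flip the winner; claims~2 and~3 are handled identically.

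The one place worth tightening is your structural step. Your ``pair $\delta^-$ against $\delta^+$, then absorb the leftover by canceling-out'' sketch is imprecise because canceling-out only neutralizes a \emph{uniform} block of insertions/deletions, whereas the leftover after pairing is a non-uniform vector in general, so it cannot be absorbed directly. The paper handles this more cleanly by doing the normalization \emph{first}: it shifts $\hhist$ uniformly to $\thist_j := \hhist_j + (n-\|\hhist\|_1)/m!$, so that $\|\thist\|_1=n$ and $r(\thist)=r(\hhist)$ by canceling-out, and only \emph{then} runs the $\MOV$ argument on the same-total pair $(\thist,\hist)$ (this is the paper's Proposition~\ref{prop:MOV}, which requires $\|\thist-\hist\|_1<2\MOV$). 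Since the uniform shift has $\ell_1$-mass at most $\|\hhist-\hist\|_1$, your $\MOV/4$ bound on $\|\hhist-\hist\|_1$ easily survives this normalization, so the fix costs nothing.
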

For readability, we will present the proof of Theorem~\ref{lem:gsr_K=1} after the proof of Theorem~\ref{thm:gsr}. In Theorem~\ref{thm:gsr}, we will present the theoretical guarantee of Algorithm~\ref{alg:quantum_multi} when parameter $K$ is set properly. Under this setting, Algorithm~\ref{alg:quantum_multi} contains both quantum and classical aggregation methods. Comparing Theorem~\ref{lem:gsr_K=1} with Theorem~\ref{thm:gsr}, we know the (classical) plurality voting can reduce the $(1/\epsilon)$ term in runtime to $\log(1/\epsilon)$.

\begin{theorem}[Theoretical guarantee of Algorithm~\ref{alg:quantum_multi}]\label{thm:gsr}
For any given $\varepsilon\in (0,1) $, Algorithm~\ref{alg:quantum_multi} with $K=\lfloor24\ln(1/\varepsilon)\rfloor+1$ and $\tone = \sigma(1/4) = \Theta\big(\log(n/\MOV)\big)$ has the following three properties.\\
% $K=24\ln(1/\varepsilon)$ and $\tone = \max\{2,\, M+1\}$ with $M=$
% $$\llceil t + \log \left(\frac{\pi(2(m!)+2)(m!)((m!+1)n + (m!)\MOV)}{n\cdot \MOV}\right)\rrceil.$$\\
1. Its $\prc \geq 1-\varepsilon$.\\
2. Its runtime is $\Theta\left(\frac{n\log(1/\varepsilon)}{\MOV}\right)$.\\
3. Its space requirement is  $\Theta\left(\log(\frac{n\log(1/\varepsilon)}{\MOV})\right)$.
\end{theorem}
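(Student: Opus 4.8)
The plan is to read Theorem~\ref{thm:gsr} as a textbook ``repeat-and-vote'' amplification built on top of Theorem~\ref{lem:gsr_K=1}. Since the parameter is $\tone = \sigma(1/4)$, each pass $k$ of the outer loop of Algorithm~\ref{alg:quantum_multi} is an independent execution of the $K=1$ version of the algorithm run with error budget $\varepsilon = 1/4$. Applying Theorem~\ref{lem:gsr_K=1} with $\varepsilon = 1/4$ therefore gives three per-round facts: (i) $\Pr[c^{(k)} = r(P)] \ge 1 - 1/4 = 3/4$; (ii) pass $k$ takes $\Theta(n/\MOV)$ time; (iii) pass $k$ uses $\Theta(\log(n/\MOV))$ space; and $\sigma(1/4) = \Theta(\log(n/\MOV))$ is immediate from the definition of $\sigma$. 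Moreover, because distinct passes build fresh quantum counting circuits and perform independent measurements, $c^{(1)},\dots,c^{(K)}$ are i.i.d.\ conditioned on the (fixed) input profile $P$.

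For Property~1, I would let $X = \bigl|\{\,k \in \{1,\dots,K\} : c^{(k)} = r(P)\,\}\bigr|$, a sum of $K$ i.i.d.\ Bernoulli variables with mean $\mu = \mathbb{E}[X] \ge 3K/4$. If $X > K/2$ then $r(P)$ is the unique candidate occurring in a strict majority of the rounds, hence is the classical plurality winner returned by Algorithm~\ref{alg:quantum_multi}, irrespective of how ties are broken. Since $K/2 \le \tfrac{2}{3}\mu$, the multiplicative Chernoff bound yields
\begin{equation}\nonumber
\Pr\!\left[X \le \tfrac{K}{2}\right] \;\le\; \Pr\!\left[X \le \bigl(1-\tfrac{1}{3}\bigr)\mu\right] \;\le\; \exp\!\left(-\tfrac{\mu}{18}\right) \;\le\; \exp\!\left(-\tfrac{K}{24}\right) \;\le\; \varepsilon,
\end{equation}
the last step using $K = \lfloor 24\ln(1/\varepsilon)\rfloor + 1 \ge 24\ln(1/\varepsilon)$. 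Hence $\prc \ge \Pr[X > K/2] \ge 1-\varepsilon$.

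For Properties~2 and~3 I would just sum the per-round costs. The $K$ passes cost $\Theta(K \cdot n/\MOV)$ time in total, and the closing classical plurality aggregation over $K$ candidate labels adds only $O(K)$ (with $m$ fixed), so the runtime is $\Theta(K n/\MOV) = \Theta\bigl(n\log(1/\varepsilon)/\MOV\bigr)$ because $K = \Theta(\log(1/\varepsilon))$. For space, the qubits are reused from pass to pass, so the quantum part still needs $\Theta(\log(n/\MOV))$ by Theorem~\ref{lem:gsr_K=1}, and $m$ classical counters of $O(\log K)$ bits track how often each candidate has won so far, adding $O(\log\log(1/\varepsilon))$ bits; the total is $\Theta\bigl(\log(n/\MOV) + \log\log(1/\varepsilon)\bigr) = \Theta\bigl(\log(n\log(1/\varepsilon)/\MOV)\bigr)$. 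The only place that needs care is Property~1: one must (a) make the inter-round independence explicit so a concentration bound applies, and (b) choose the Chernoff slack exactly as $\delta = 1/3$ --- which is what a per-round success probability of $3/4$ affords, measured against the majority threshold $1/2$ --- so that the exponent becomes $K/24$ and matches the stated $K = \lfloor 24\ln(1/\varepsilon)\rfloor + 1$. Everything else is routine accounting over Theorem~\ref{lem:gsr_K=1}.
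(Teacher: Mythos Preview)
Your proposal is correct and follows essentially the same approach as the paper: invoke Theorem~\ref{lem:gsr_K=1} at $\varepsilon=1/4$ to get per-round success $\ge 3/4$, then apply a multiplicative Chernoff bound to the majority vote, arriving at the same exponent $K/24$ and hence the stated choice $K=\lfloor 24\ln(1/\varepsilon)\rfloor+1$. The paper parametrizes the Chernoff step directly at the threshold $K/2$ (writing the exponent as $(1-\tfrac{1}{2p})^2\cdot K\cdot \tfrac{p}{2}$ and then solving for $K$), whereas you route through $\mu\ge 3K/4$ and $\delta=1/3$; these are algebraically equivalent, and your explicit treatment of independence and of Properties~2--3 (which the paper leaves implicit) is a nice addition.
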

\begin{proof}
%Theorem~\ref{thm:gsr} holds by applying Theorem~\ref{lem:gsr_K=1} with $\varepsilon = \frac14$. 
Setting $\tone = \sigma(1/4)$, we know that Algorithm~\ref{alg:quantum_multi} outputs the correct winner in each round with at least $p = \frac34$ probability according to Theorem~\ref{lem:gsr_K=1}. By Chernoff bound,%By Chernoff bound, the probability that the correct winner is chosen for more than $K/2$ rounds (which guarantees to be the most rounds without ties) is
\begin{equation}\nonumber%\label{equ:chernoff}
\begin{split}
    \prc & \geq \Pr[r(P)\text{ wins in more than  } K/2 \text{ rounds}]\\
    & > 1 - \exp\left(-\Big(1-\frac{1}{2p}\Big)^2\cdot K \cdot \frac{p}{2}\right).
\end{split}
\end{equation}
Thus, $\prc \geq 1-\varepsilon$ holds when
\begin{equation}\nonumber
K > \frac{2p\cdot\ln(1/\varepsilon)}{(p-1/2)^2} = 24\ln(1/\varepsilon).
\end{equation}
Then, Theorem~\ref{thm:gsr} follows by the monotonicity of Chernoff bound towards $K$.
\end{proof}

Now, we have presented the two main theorems about Algorithm~\ref{alg:quantum_multi} and are ready to show the full proof of Theorem~\ref{lem:gsr_K=1}.

\begin{proof}[Proof~of~Theorem~\ref{lem:gsr_K=1}]
    Theorem~\ref{lem:gsr_K=1} follows by setting $s=\sigma(\varepsilon)$ for the following lemma, which bounds $\prc$, runtime, and space complexity of quantum-accelerated voting for different settings on the numbers of qubits $\tone$.
    \begin{lemma}\label{thm:hist_same} 
        For any $\tone \ge 2$, quantum-accelerated voting (Algorithm~\ref{alg:quantum_multi}) with $K=1$ has the following three properties.\\
        1. Its $\prc\geq1 - \frac{m!}{2(\delta\cdot 2^{\tone} - 1)}$ for any $ \delta \in (2^{-\tone}, \frac{\MOV}{2^{t+1}\pi\cdot(m!)} -2^{-\tone})$\\
        2. Its runtime is $\Theta(2^{\tone})$\\
        3. Its space requirement is $\Theta(\log(n) + \tone)$. \\
        Note that Property 1 holds only when the feasible region of $\delta$ is non-empty, which sets a constraint for $\tone$.
    \end{lemma}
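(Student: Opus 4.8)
The plan is to prove the three properties in turn; Property~1 is the only substantive one, and it is established on a single ``good event'' that all $m!$ quantum-counting circuits are accurate.

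\textbf{Property 1.} Fix any $\delta$ in the stated interval and let $G$ be the event that $|\hat\varphi_j-\varphi_j|<2^{-\tone}+\delta$ for every $j\in\{1,\dots,m!\}$, where the $j$-th circuit returns $\hat\varphi_j=0.b_1\cdots b_{\tone}$ estimating $\varphi_j=\arcsin(\sqrt{\hist_j\cdot 2^{-t}})/\pi$. Since $\delta>2^{-\tone}$, the tail bound of Lemma~\ref{lem:q_count_biased} gives $\Pr[|\hat\varphi_j-\varphi_j|\ge 2^{-\tone}+\delta]\le \frac{1}{2(\delta 2^{\tone}-1)}$ for each $j$, so a union bound yields $\Pr[G]\ge 1-\frac{m!}{2(\delta 2^{\tone}-1)}$. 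Next I would push the $\hat\varphi_j$-accuracy through to $\hhist_j$: from $\hhist_j=2^t\sin^2(\pi\hat\varphi_j)$, $\hist_j=2^t\sin^2(\pi\varphi_j)$ and $|\sin^2 a-\sin^2 b|=|\sin a-\sin b|\,|\sin a+\sin b|\le 2|a-b|$ one gets $|\hhist_j-\hist_j|\le 2^{t+1}\pi\,|\hat\varphi_j-\varphi_j|$; on $G$ this is below $2^{t+1}\pi(2^{-\tone}+\delta)$, and the right endpoint of the $\delta$-interval is chosen precisely so that $2^{t+1}\pi(2^{-\tone}+\delta)<\MOV/(m!)$, hence $\|\hhist-\hist\|_1=\sum_{j=1}^{m!}|\hhist_j-\hist_j|<\MOV$ on $G$. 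Finally, since moving one voter changes the histogram by an integer vector of $\ell_1$-mass at most $2$, the definition of margin of victory forces $r$ to equal $r(\hist)$ on every genuine histogram within $\ell_1$-distance $\MOV$ of $\hist$; because $r$ is anonymous and cancels out, the fact that $\hhist$ is real-valued and need not sum to $n$ is immaterial, so this agreement carries over and $r(\hhist)=r(\hist)$ on $G$. With $K=1$ the algorithm outputs $c^{(1)}=r(\hhist)$, so $\prc\ge\Pr[G]\ge 1-\frac{m!}{2(\delta 2^{\tone}-1)}$; this is also exactly the bound that, combined with $\tone=\sigma(\varepsilon)$, will give Theorem~\ref{lem:gsr_K=1}.

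\textbf{Properties 2 and 3.} These are immediate from Lemma~\ref{lem:q_count_biased}. With $K=1$ the outer loop executes once and calls $m!$ quantum-counting circuits, each of runtime $\Theta(2^{\tone})$; constructing the $f_j$ (given oracle/QRAM access to the ballots) and evaluating $r(\hhist)$ are $O(1)$ in both $n$ and $2^{\tone}$, and $m$ is fixed, so the total runtime is $\Theta(2^{\tone})$. Each circuit uses $t+\tone$ qubits with $t=\lceil\log n\rceil$, the circuits can reuse one register, and the classical array $\hhist$ takes $O(\log m!)$ space, so with $m$ fixed the space requirement is $\Theta(\log n+\tone)$.

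\textbf{Main obstacle.} The one genuinely delicate step is the last part of Property~1 — inferring $r(\hhist)=r(\hist)$ from $\|\hhist-\hist\|_1<\MOV$. This needs a careful statement that margin of victory lower-bounds the $\ell_1$-robustness of the winner over honest histograms, together with an argument (via anonymity and canceling-out, or an explicit robustness lemma for positional scoring rules, Copeland, maximin, and STV) that this robustness survives replacing an honest histogram by the possibly non-integral, possibly-not-$n$-summing vector $\hhist$ actually fed to $r$. Beyond that, the only care needed is bookkeeping the constant $2^{t+1}\pi$ and the slack in the $\delta$-interval so that they line up with the definition of $\sigma$.
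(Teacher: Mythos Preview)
Your proposal is correct and follows essentially the same route as the paper: union bound over the $m!$ counting circuits via Lemma~\ref{lem:q_count_biased}, a Lipschitz step from $\hat\varphi_j$ to $\hhist_j$, an $\ell_1$-closeness of histograms, and then a margin-of-victory robustness argument; Properties~2 and~3 are read off from Lemma~\ref{lem:q_count_biased} exactly as you do.

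The one place where the paper does more than you is precisely the step you flag as the main obstacle. Rather than asserting that canceling-out makes the non-integrality and the wrong total of $\hhist$ ``immaterial,'' the paper introduces the normalized vector $\thist_j=\hhist_j+\big(\lnorm\hist\rnorm_1-\lnorm\hhist\rnorm_1\big)/(m!)$, uses canceling-out to get $r(\thist)=r(\hhist)$, and then proves a separate robustness statement (Proposition~\ref{prop:MOV}) for histograms with the \emph{same} $\ell_1$-mass: if $\lnorm\thist\rnorm_1=\lnorm\hist\rnorm_1$ and $\lnorm\hist-\thist\rnorm_1<2\MOV$ then $r(\hist)=r(\thist)$. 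Your bound $\lnorm\hhist-\hist\rnorm_1<\MOV$ does feed into this after normalization (the shift contributes at most another $\MOV$ in $\ell_1$), so your argument can be completed, but the paper's explicit $\thist$ is what makes the step rigorous.

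A minor constant: your Lipschitz estimate $|\sin^2 a-\sin^2 b|\le 2|a-b|$ gives $|\hhist_j-\hist_j|\le 2^{t+1}\pi|\hat\varphi_j-\varphi_j|$, whereas the paper uses $\tfrac{d}{dx}\sin^2(\pi x)=\pi\sin(2\pi x)\le \pi$ to get $2^t\pi$. The discrepancy is harmless because the paper then loses exactly that factor of $2$ when passing from $\hhist$ to $\thist$, so both arguments land on the same constraint $2^{t+1}\pi(2^{-\tone}+\delta)<\MOV/(m!)$ encoded in the $\delta$-interval.
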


\begin{proof}%[Proof~of~Lemma~\ref{thm:hist_same}]
\noindent\textbf{Runtime and space requirement.} The runtime and the space requirement properties come from the corresponding properties of quantum counting (see Lemma~\ref{lem:q_count_biased}). As the anonymous rule $r$ computes the winner based on an $m!$ dimension histogram, its runtime and space requirement will only depend on the number of candidates $m$, which is $\Theta(1)$ in our analysis. Thus, when $K=1$, Algorithm~\ref{alg:classical_multi} has the following two properties.
\begin{equation}\nonumber
\begin{split}
\text{runtime}
=\;& \underbrace{\Theta(1)}_{\text{the voting rule }r} + \underbrace{\Theta(2^{\tone})}_{\text{quantum counting}} = \Theta(2^{\tone}).
\end{split}
\end{equation}
\begin{equation}\nonumber
\begin{split}
\text{space requirement} =\;&\underbrace{\Theta(1)}_{\text{the voting rule }r} + \underbrace{t+\tone}_{\text{quantum counting}}\\
=\;& \Theta(1)+\lfloor\log(n)\rfloor+s\\
=\;& \Theta(\log(n) + \tone).
\end{split}
\end{equation}

\noindent\textbf{$\bm{\Pr[\text{\textbf{correct}}]}$.} To simplify the notation, let $\thist_j$ be the histogram by adding (or removing) the same fraction of each ranking to $\hhist$ so that the sum of the histogram is $n$: for all dimension $j$,  $\thist_j =\hhist_j +  (||\hist||_1 - ||\hhist||_1)/(m!)$ (recall that $||\hist||_1 = n$). By the assumption of canceling-out (see the second paragraph in Section~\ref{sec:prelim} for its definition), we know that $r(\thist) = r(\hhist)$.

% \textcolor{red}{Ao: Can Qishen also modify the following paragraph according to the new versions of propositions?}

The proof of the $\prc$ property is obtained by the following two propositions. Proposition~\ref{prop:hist_bounded} guarantees that the $\ell_1$-norm between $\hist$ and $\thist$ is bounded by $\MOV$ with at least $1 - \frac{m!}{2(\delta\cdot 2^{\tone} - 1)}$ probability. Proposition~\ref{prop:MOV} guarantees that any histogram in the neighborhood of $\hist$ with $\ell_1$-norm smaller than $2\MOV$ shares the same winner with $\hist$. Therefore, with at least $1 - \frac{m!}{2(\delta\cdot 2^{\tone} - 1)}$ probability, $\thist$ (and therefore $\hhist$) leads to the correct winner. 

% \textcolor{red}{Ao: Can Qishen modify Proposition 1? We need a proposition about the $\ell_1$-norm distance between two histograms.}

The next proposition shows that $\thist_j$ is close to $\hist_j$ with high probability. %Therefore, it is sufficient to show that the difference between $\hist_j$ and $\thist_j$ is bounded.

% In the following lemma, let $\thist_j$ be the normalize $\hhist$ such that the sum of the histogram is $n$: for all dimension $j$,  $\thist_j = \frac{\hhist_j}{\lnorm\hhist\rnorm_1}\cdot n$.

\begin{proposition}
    \label{prop:hist_bounded}
    For all $ \delta \in (2^{-\tone}, \frac{\MOV }{2^{t+1}\pi\cdot(m!)} -2^{-\tone})$, the probability that
    $\lnorm\hist-\thist\rnorm_1 < \MOV$
    is at least $1 - \frac{m!}{2(\delta\cdot 2^{\tone} - 1)}$.  
\end{proposition}

% \begin{lemma}
%     \label{prop:hist_bounded}
%     For all $ \delta \in (2^{-\tone}, \frac{\MOV \cdot n}{2^t\pi(m!)((m!+1)n + (m!)\cdot \MOV)} -2^{-\tone})$, the probability that $|\thist_j - \hist_j| < \frac{\MOV}{m!}$ for all dimension $j$ is at least $1 - \frac{m!}{2(\delta\cdot 2^{\tone} - 1)}$.  
% \end{lemma}

\begin{proof}
    This proof fixes an arbitrary pair of $\tone$ and $\delta$ satisfying the condition. We prove a stronger result that the probability that for all dimension $j$, $|\thist_j - \hist_j| < \frac{\MOV}{m!}$ is at least $1 - \frac{m!}{2(\delta\cdot 2^{\tone} - 1)}$. When the stronger result holds, 
    \begin{equation}\nonumber
    \begin{split}
        \lnorm\hist-\thist\rnorm_1 \le &\ \sum_{j=1}^{m!}|\hist_j - \thist_j| < m!\cdot\frac{\MOV}{m!} = \MOV. 
    \end{split}
    \end{equation}
    Therefore, the probability that for all dimension $j$, $|\thist_j - \hist_j| < \frac{\MOV}{m!}$ is at least $1 - \frac{m!}{2(\delta\cdot 2^{\tone} - 1)}$ directly implies that the probability that
    $\lnorm\hist-\thist\rnorm_1 < \MOV$
    is at least $1 - \frac{m!}{2(\delta\cdot 2^{\tone} - 1)}$.
    % To show that $\lnorm\hist-\thist\rnorm_1 < \MOV$, it is suffice to show that $|\thist_j - \hist_j| < \frac{\MOV}{m!}$ for all dimension $j$ for any dimension $j$.    
    
    By applying the union bound on Lemma~\ref{lem:q_count_biased} for all dimensions $j$, we know that for all dimension $j$, the probability that $|\hat{\varphi}_j - \varphi_j| < 2^{-\tone} + \delta$ is at least $1 - \frac{m!}{2(\delta\cdot 2^{\tone} - 1)}$. Then we show that $|\hat{\varphi}_j - \varphi_j| < 2^{-\tone} + \delta$ for all dimension $j$ implies $|\thist_j - \hist_j| < \frac{\MOV}{m!}$ for all $j$. 

    Suppose $|\hat{\varphi}_j - \varphi_j| < 2^{-\tone} + \delta$ holds for all dimension $j$.  Let $h(x) = 2^t\sin^2(\pi x)$. We have $h(\hat{\varphi}_j) = \hhist_j$ and $h({\varphi}_j) = \hist_j$. Note that $\frac{dh(x)}{dx} = 2^t\pi\sin(2\pi x) \le 2^t\pi$ holds for arbitrary $x$. Therefore,  
    \begin{equation}\nonumber
    |\hhist_j - \hist_j| \le 2^t\pi |\hat{\varphi}_j - \varphi_j| < 2^t\pi\cdot(\delta + 2^{-s}).
    \end{equation}

    Let $d = 2^t\pi\cdot(\delta + 2^{-s})$. 
    By summing over all dimensions $j$, we have \begin{equation}\nonumber
        \left|||\hhist ||_1- \lnorm\hist \rnorm_1\right| \le \sum_{j=1} ^{m!} |\hhist_j - \hist_j| < (m!)d.
    \end{equation}
    Given the bound of difference between $\hhist$ and $\hist$, we show that since $\thist$ is the modified $\hhist$, the difference between $\thist$ and $\hist$ is also bounded. Now we consider the difference between $\thist_j$ and $\hist_j$ for any dimension $j$. 

    For the upper bound, we have $\hhist_j < \hist_j + d$ and $(||\hist||_1 - ||\hhist||_1) < (m!)d$. Therefore, 
    \begin{equation}\nonumber
        \begin{split}
            \thist_j = &\; \hhist_j + (||\hist||_1 - ||\hhist||_1)/(m!)\\
            < & \; \hist_j + 2d. 
        \end{split}
    \end{equation}
    % \begin{align*}
    %     \thist_j =\; &  \frac{\hhist_j\cdot n}{\lnorm \hhist\rnorm_1}\\
    %     < \;& \frac{(\hist_j + d)\cdot n}{n - (m!)d}\\
    %     = \;& \hist_j + \frac{nd + (m!)d\cdot \hist_j}{n - (m!)d}.
    % \end{align*}
    % Note that $0\le \hist_j \le n$. Therefore, 
    % \begin{align*}
    %     \thist_j <& \hhist_j + \frac{n(m!+1)d}{n - (m!)d}.
    % \end{align*}
    Similarly, for the lower bound, we have $\hhist_j > \hist_j - d$. Therefore, $\thist_j > \hist_j - 2d.$
    % \begin{align*}
    %     \thist_j = \;&  \frac{\hhist_j\cdot n}{\lnorm\hhist\rnorm_1}\\
    %     > \;& \frac{(\hist - d)\cdot n}{n + (m!)d}\\
    %     = \;& \hhist_j - \frac{nd + (m!)d\cdot \hist_j}{n + (m!)d}\\
    %     \ge \;& \hhist_j - \frac{n(m!+1)d}{n + (m!)d}.
    % \end{align*}

    Therefore, for all dimension $j$, $|\thist_j - \hist_j | < 2d$. The condition that $ \delta < \frac{\MOV}{2^{t+1}\pi\cdot(m!)} -2^{-\tone}$ guarantees that $2d = 2^{t+1}\pi\cdot(\delta + 2^{-\tone}) < \frac{\MOV}{m!}$. Therefore, for all dimension $j$, $|\hist_j - \thist_j| <  \frac{\MOV}{m!}$, which finishes our proof. 
    % Therefore, for all dimension $j$, $|\hist_j - \thist_j| < \frac{n(m!+1)d}{n - (m!)d}$.
    % The condition that $ \delta < \frac{\MOV \cdot n}{2^t\pi(m!)((m!+1)n + (m!)\cdot \MOV)} -2^{-\tone}$ guarantees that $\frac{n(m!+1)d}{n - (m!)d} < \frac{\MOV}{m!}$. Therefore, for all dimension $j$, $|\hist_j - \thist_j| <  \frac{\MOV}{m!}$, which finishes our proof.   
\end{proof}

%\textcolor{red}{Ao: Can Qishen modify Proposition 2? It's better to replace $\hhist$ with $\thist$.}

\begin{proposition}\label{prop:MOV}
    For any histogram $\thist$ such that $\lnorm\hist\rnorm_1 = \lnorm\thist\rnorm_1$ and $\lnorm\hist-\thist\rnorm_1 < 2\MOV$, $r(\hist) = r(\thist)$.
\end{proposition}

\begin{proof}
    We consider the voting rule $r$ that allows voters to vote fractionally. That is, each voter has a total weight of 1 and can assign the weight arbitrarily to every ranking. Accordingly, $\MOV$ is the smallest amount of weight of votes to change the winner. 
    
    Suppose the statement is not true, and there exists a $\thist$ such that $\lnorm\hist-\thist\rnorm_1 < 2\MOV$ and  $r(\hist) \neq r(\thist)$. Let $J_1$ be the set of dimension $j$ such that $\thist_j > \hist _j$, and $J_2$ be the set of $j$ such that $\thist_j < \hist _j$. Since $\lnorm \hist\rnorm_1 = \lnorm \thist\rnorm_1 = n$, there exists a way of transforming $\thist$ to $\hist$ by accumulating all the excess weights of the dimensions in  $J_1$ and assigning it to the dimensions in $J_2$. The changes in the weight \begin{equation}\nonumber
    \sum_{j\in J_1} |\thist_j - \hist _j| = \sum_{j\in J_2} |\thist_j - \hist _j|\ge \MOV
    \end{equation}
    by the definition of $\MOV$. However, this contradicts the assumption that 
    \begin{equation}\nonumber
    \lnorm\hist-\thist\rnorm_1 = \sum_{j\in (J_1\cup J_2)} |\thist_j - \hist _j| < 2\MOV.
    \end{equation}
    Therefore, for any histogram $\thist$ such that $\lnorm\hist-\thist\rnorm_1 < 2\MOV$, $r(\hist) = r(\thist)$.
\end{proof}
Then, the $\prc$ property of Lemma~\ref{thm:hist_same} follows by combining Proposition~\ref{prop:hist_bounded} and Proposition~\ref{prop:MOV}.
\end{proof}
    According to Lemma~\ref{thm:hist_same}, Algorithm~\ref{alg:quantum_multi}'s $\prc \geq 1 - \frac{m!}{2(\delta\cdot 2^{\tone} - 1)}$. In order to achieve $\prc$ of $1 - \varepsilon$, constraint $\delta \ge 2^{-\tone}(\frac{m!}{2\varepsilon}+1)$ needs to be satisfied. Combining the constraint and the feasible region in Lemma~\ref{thm:hist_same}, we know that $\delta\in\left[2^{-\tone}(\frac{m!}{2\varepsilon}+1), \; \frac{\MOV }{2^{t+1}\pi\cdot(m!)} -2^{-\tone}\right)$, which must not be empty, \emph{i.e.}
    % \begin{equation*}
    %     2^{-\tone}(\frac{m!}{2\varepsilon}+1)  < \frac{\MOV \cdot n}{2^t\pi(m!)((m!+1)n + (m!)\cdot \MOV)} -2^{-\tone}. 
    % \end{equation*}
    \begin{equation}\label{equ:delta}
        2^{-\tone}\left(\frac{m!}{2\varepsilon}+1\right)  <  \frac{\MOV }{2^{t+1}\pi\cdot(m!)} -2^{-\tone}.
    \end{equation}
    It's not hard to verify that $\tone = \sigma(\varepsilon)$ satisfies the constraint in (\ref{equ:delta}). Then, Theorem~\ref{thm:gsr} follows by setting $\delta = 2^{-\tone}(\frac{m!}{2\varepsilon}+1)$ and $\tone = \sigma(\varepsilon)$ for Lemma~\ref{thm:hist_same}.  
\end{proof}

\section{Compare Quantum and Classical Voting}
\label{sec:compare}
This section compares quantum-accelerated voting with (the best performance of) classical voting algorithms. The classical algorithm is designed according to the idea of sampling (either with or without replacement). At the high level, it uses the randomly sampled votes to estimate the winner. We analyze the runtime and space requirements for classical sampling algorithms and compare them with those of our quantum-accelerated voting algorithm. 

\noindent\textbf{When does quantum (may) accelerate voting?}
Firstly, we provide an intuitive explanation of when quantum would accelerate voting the most. 
We first think about the cases where classical algorithms (\emph{e.g.}, randomly sampling a subset of votes and using the subset to predict the winner) do not need to be improved or cannot be improved. When the margin of victory $\MOV = \Theta(n)$, classical algorithms are already very fast according to the Chernoff bound, which says the classical algorithms' error rate can be exponentially small in terms of runtime~\citep{Bhattacharyya2021sample}. Another case is when $\MOV$ is very small (\emph{e.g.}, $\MOV = \Theta(1)$) where classical algorithms' performance is close to the optimal. In this case, any algorithm has to look into each vote to decide the winner. Since the complexity of counting every vote is $\Theta(n)$, there is not a lot of space for the classical algorithms to be improved.  
%In order to improve readability, we let ``tie'' ($\MOV \leq 1$) be a special case of $\MOV = \Theta(1)$. 

Between these two extremes is the case where quantum accelerates voting most significantly, for example,  when the margin of victory $\MOV = \Theta(n^{c})$, where $c\in(0,1)$ is a constant. In this case, the classical voting would be as slow as $\Omega(\frac{n^2}{\MOV^2})$ for a fixed error rate $\epsilon$. %, and the acceleration of $\Theta(\frac{n}{\MOV})$ from quantum-accelerated voting is significant. 
On the other hand, the runtime of the quantum-accelerated voting, $\Theta(\frac{n}{\MOV})$, is quadratically faster. This comparison is also shown in our experiment of $m=2$ and $\MOV = 1024$ for plurality (the middle column in Figure~\ref{fig:plu_m_2}), where the number of voters $n\approx 10^6$ and $\MOV = \sqrt{n}$, \emph{i.e.} the winner gets $\sim$$0.2\%$ more votes than the loser.

Theorem~\ref{thm:classical} establishes a theoretical ``complexity lower bound'' %\lirong{IMPORTANT: should it be ``lower bound''} 
of any classical voting algorithms (based on sampling with replacement) for many common voting rules. Consequently,  sampling-based algorithm with replacement is  at least quadratically slower than quantum-accelerated voting. Here, an algorithm being `` sampling-based'' means the sampling method is the only method for the algorithm to get information about the voting profile $P$. We say one voting rule reduces to majority voting for two candidates if the voting rule always has the same winner as majority (using whatever tie-breaking method) when $m=2$. Most commonly used voting rules reduce to majority for two candidates (\emph{e.g.,} any positional scoring rules, STV, Copeland, and maximin, just to name a few). %We note Theorem~\ref{thm:classical} also covers the algorithms using a non-fixed number of samples (\emph{e.g.,} stopping the sampling process based on a non-deterministic rule).\lirong{This sentence might be confusing, so, maybe remove it?} %Theorem~\ref{thm:with_without} in Appendix~\ref{app:with_witout} shows that sampling without replacement has the same asymptotic manner as sampling with replacement.

% \textcolor{red}{Ao: Can Qishen help with the following things?
% \begin{enumerate}
%     \item Generalize the below lemma as a theorem for $m\geq 2$.
%     \item Add a discussion to compare the above theorem with the literature results (see the example below.)
%     \item Add and prove a statement that sampling with replacement is similar to sampling without replacement under suitable conditions.
%     \item Update this chapter and the introduction according to our new changes.
% \end{enumerate}
% }

\begin{theorem}
    \label{thm:classical}
    Given any fixed $m \ge 2$, for any $\varepsilon\in (0, 0.5]$, any fast voting algorithm based on sampling with replacement for the voting rule such that reduces to majority voting for two candidates %\lirong{define this somehwere?} 
    requires at least $\Omega\left(\frac{n^2\cdot\left(\frac{1}{2}-\varepsilon\right)^2}{\MOV^2}\right)$ (expected) runtime and at least $\Omega\left(\log\big(\frac{n^2\cdot\left(\frac{1}{2}-\varepsilon\right)^2}{\MOV^2}\big)\right)$ (expected) space to achieve $\prc\geq 1-\varepsilon$ in the worst case. %\lirong{IMPORTANT: I think the theorem assumes fixed $m$ but variable $\epsilon$, in the sense that the constant in $\Omega(\dot)$ only depends on $m$ but not $\epsilon$. Am I right?}
\end{theorem}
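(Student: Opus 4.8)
The plan is to prove this lower bound with Le Cam's two‑point method, reducing the winner‑determination task to a coin‑distinguishing problem and then invoking a Pinsker/KL argument.

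\textbf{Step 1: reduction to a hard pair of two‑candidate profiles.} Since the rule reduces to majority for two candidates, it suffices to build hard instances in which only $c_1,c_2$ ``matter'' (for $m>2$, fix $c_3,\dots,c_m$ in a common order below $c_1,c_2$ in every vote, so the winner is the majority winner of $\{c_1,c_2\}$). We may assume $\MOV\le n/4$: otherwise $\MOV=\Theta(n)$ and the claimed bound is only $\Omega\big((1/2-\varepsilon)^2\big)=\Omega(1)$, which is trivial. Fix two profiles $P^{+}$ and $P^{-}$ on $n$ votes where under $P^{+}$ candidate $c_1$ gets about $n/2+\MOV$ top votes and $c_2$ gets about $n/2-\MOV$ (adjusting by an additive constant and respecting parity so that the margin of victory is exactly $\MOV$), and $P^{-}$ is the mirror image; a direct check shows both have margin of victory $\MOV$ but opposite winners. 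A sampling‑with‑replacement algorithm accesses the profile only by drawing i.i.d.\ uniform votes, i.e.\ it observes an i.i.d.\ stream of $\mathrm{Ber}(\tfrac12+\tfrac{\MOV}{n})$ bits under $P^{+}$ and $\mathrm{Ber}(\tfrac12-\tfrac{\MOV}{n})$ bits under $P^{-}$, and everything it outputs — and the number of samples it chooses to read — is a (randomized, possibly adaptive) function of that stream.

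\textbf{Step 2: sample‑complexity lower bound.} Any algorithm with $\prc\ge1-\varepsilon$ on both $P^{+}$ and $P^{-}$ induces a test between the two product laws with total error at most $2\varepsilon$; hence the total variation distance of the observed distributions is at least $1-2\varepsilon$. By Pinsker applied to the stream transcript, $\mathbb{E}[\tau]\cdot D_{\mathrm{KL}}\big(\mathrm{Ber}(\tfrac12+\tfrac{\MOV}{n})\,\|\,\mathrm{Ber}(\tfrac12-\tfrac{\MOV}{n})\big)=\Omega\big((1-2\varepsilon)^2\big)$, where $\tau$ is the (random) number of samples read; to get the transcript‑KL bound in the adaptive case I would use Wald's likelihood‑ratio identity, which gives transcript‑KL $=\mathbb{E}[\tau]$ times the per‑sample KL (a truncation‑at‑$2\mathbb{E}[\tau]$ argument also works). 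Since $D_{\mathrm{KL}}\big(\mathrm{Ber}(\tfrac12+\eta)\,\|\,\mathrm{Ber}(\tfrac12-\eta)\big)=O(\eta^2)$ for $\eta=\MOV/n\le\tfrac14$, we get $\mathbb{E}[\tau]=\Omega\!\big(\tfrac{(1-2\varepsilon)^2 n^2}{\MOV^2}\big)=\Omega\!\big(\tfrac{(1/2-\varepsilon)^2 n^2}{\MOV^2}\big)$, and since each sample costs $\Omega(1)$ time this is the expected‑runtime bound.

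\textbf{Step 3: space lower bound.} A sampling algorithm using $S$ bits of memory is a randomized finite‑state machine with $2^{S}$ states reading the Bernoulli stream; distinguishing two coins whose biases differ by $\gamma=\Theta(\MOV/n)$ with bounded error requires $\Omega(\log(1/\gamma))=\Omega(\log(n/\MOV))=\Omega\!\big(\log\tfrac{n^2(1/2-\varepsilon)^2}{\MOV^2}\big)$ bits — intuitively the machine must maintain a counter fine enough to detect a $\Theta(\gamma)$ bias, equivalently to track (up to a constant factor) how many of the $\Omega(n^2/\MOV^2)$ samples it has seen. I expect the two main obstacles to be making Step 2 rigorous under adaptive stopping (one cannot treat $\tau$ as fixed; the Wald/truncation step is where the care goes) and making Step 3 rigorous: the weak version (``the algorithm must store the sample count'') is immediate, but the clean $\Omega(\log(n/\MOV))$ memory bound for bounded‑error algorithms needs either a careful information‑theoretic argument or an appeal to known results on the memory complexity of the coin problem.
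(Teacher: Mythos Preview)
Your approach is correct and genuinely different from the paper's. Both proofs share the same reduction from $m$ candidates to two (your Step~1 is the paper's Step~2), but for the core two-candidate lower bound the paper takes a channel-capacity route: it models sampling with replacement as a binary symmetric channel with crossover probability $\nlo/n$, bounds its capacity by $\cp<4\MOV^2/n^2$ via the Tops{\o}e entropy inequalities $4p(1-p)<H(p)<(4p(1-p))^{1/\ln 4}$, and then invokes Shannon's converse (reconstructing a single bit with error $\varepsilon$ needs at least $(1-H(\varepsilon))/\cp$ channel uses). Your Le Cam two-point argument with Pinsker and the per-sample KL bound $D_{\mathrm{KL}}(\mathrm{Ber}(\tfrac12+\eta)\|\mathrm{Ber}(\tfrac12-\eta))=O(\eta^2)$ reaches the same $\Omega\big(n^2(\tfrac12-\varepsilon)^2/\MOV^2\big)$ conclusion more directly and, importantly, your Wald/truncation remark explicitly addresses adaptive stopping, which the paper's channel argument glosses over. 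The paper's route buys a slightly cleaner constant through the entropy bounds; yours buys a self-contained argument that does not rely on Shannon's theorem and is the standard idiom in modern sample-complexity lower bounds. For the space bound, the paper simply observes that drawing $T$ samples needs $\Omega(\log T)$ space (your ``weak version''); your more careful finite-state-machine discussion is not needed to match the paper but would give a more robust statement.
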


% \ao{what is Cordorcetified rules? It would be better to give examples here because I don't even know what this represents.}

When $\varepsilon$ is a constant, the bound in Theorem~\ref{thm:classical} becomes $\Omega\left(\frac{n^2}{\MOV^2}\right)$, which is tight according to \citet{Bhattacharyya2021sample}. We note Theorem~\ref{thm:classical} is more general than the bounds in \citet{Bar-Yossef2001:sampling} (require a lower bound for $\varepsilon$) and \citet{Canetti1995sample} (only holds for small-scale samplings). %\lirong{general than which bounds in the literature? It is a bound for a special algorithm.}

Appendix~\ref{app:with_witout} shows that sampling without replacement has the same asymptotic manner as sampling with replacement when the number of samples $T = o(\sqrt{n})$, which is the setting of fast majority voting in many application scenarios.

% Throughout this paper, we assume that the margin of victory $\MOV = \Theta(n^{c})$, where $c\in(0,1)$ is a constant. For example, in one of the settings in our experimental verification\qishen{Ref?}, the number of voters $n\approx 10^6$ and $\MOV = \sqrt{n}$. In this example, the winner only got $\sim$$0.2\%$ more votes than the loser.\\

\begin{proof}[Proof of Theorem~\ref{thm:classical}]
    \textbf{Step 1. } A lower bound of runtime and space requirement for any fast majority voting algorithm for two candidates (Lemma~\ref{lem:classical_m2}).

    % The idea of this proof is to show that distinguishing the winner in such a voting rule is at least as difficult as distinguishing the winner in a majority vote. 
    \begin{lemma}\label{lem:classical_m2}
    For any $\varepsilon\in(0,0.5]$, any fast (2-candidate) majority voting algorithm based on sampling with replacement requires at least $\Omega\left(\frac{n^2\cdot\left(\frac{1}{2}-\varepsilon\right)^2}{\MOV^2}\right)$ (expected) runtime and at least $\Omega\left(\log\big(\frac{n^2\cdot\left(\frac{1}{2}-\varepsilon\right)^2}{\MOV^2}\big)\right)$ (expected) space to achieve $\prc\geq 1-\varepsilon$.
    \end{lemma}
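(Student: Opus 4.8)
\textbf{Proof proposal for Lemma~\ref{lem:classical_m2}.}
The plan is a two-point (hard-instance) reduction to binary hypothesis testing. First I would construct, for $m=2$, two profiles on $n$ voters: $P_A$, in which $c_1$ beats $c_2$ by a margin of exactly $\MOV$, and $P_B$, in which $c_2$ beats $c_1$ by exactly $\MOV$; concretely $P_A$ has about $n/2+\MOV$ votes for $c_1$ and $P_B$ about $n/2+\MOV$ votes for $c_2$, and I restrict to $\MOV\le n/4$ (in the complementary range the target bound is $O(1)$ and holds trivially, since any algorithm takes $\Theta(1)$ time). Because the algorithm accesses $P$ only through sampling with replacement, on $P_A$ every drawn vote is an i.i.d.\ $\mathrm{Bernoulli}(p_A)$ bit with $p_A=\tfrac12+\tfrac{\MOV}{n}$ (up to rounding), and on $P_B$ it is i.i.d.\ $\mathrm{Bernoulli}(p_B)$ with $p_B=\tfrac12-\tfrac{\MOV}{n}$, so $p_A-p_B=\Theta(\MOV/n)$ and $p_A,p_B\in[\tfrac14,\tfrac34]$. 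Let $\mathbb{Q}_A,\mathbb{Q}_B$ denote the laws of the observed sample stream (and the algorithm's coins) under the two profiles. Since the algorithm outputs $c_1$ with probability $\ge 1-\varepsilon$ under $P_A$ and outputs $c_2$ with probability $\ge 1-\varepsilon$ under $P_B$, the event $E=\{\text{output}=c_1\}$ satisfies $\mathbb{Q}_A[E]-\mathbb{Q}_B[E]\ge 1-2\varepsilon$, hence $d_{\mathrm{TV}}(\mathbb{Q}_A,\mathbb{Q}_B)\ge 1-2\varepsilon$.

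Next I would convert this into a sample-complexity bound. As a warm-up, if the algorithm drew a fixed number $T$ of samples, then by tensorization $D_{\mathrm{KL}}(\mathbb{Q}_A\|\mathbb{Q}_B)=T\cdot D_{\mathrm{KL}}(\mathrm{Bern}(p_A)\|\mathrm{Bern}(p_B))\le T\cdot\tfrac{(p_A-p_B)^2}{p_B(1-p_B)}=O(T\,\MOV^2/n^2)$, and Pinsker's inequality gives $1-2\varepsilon\le d_{\mathrm{TV}}\le\sqrt{\tfrac12 D_{\mathrm{KL}}}=O(\sqrt{T}\,\MOV/n)$, so $T=\Omega\big((1-2\varepsilon)^2 n^2/\MOV^2\big)=\Omega\big((\tfrac12-\varepsilon)^2 n^2/\MOV^2\big)$. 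To allow an adaptive, randomized stopping rule with (random) stopping time $N$, I would replace plain tensorization by Wald's likelihood-ratio identity: applying optional stopping to the martingale $\sum_{i\le k}(Z_i-d)$, where $Z_i$ is the per-sample log-likelihood ratio and $d=D_{\mathrm{KL}}(\mathrm{Bern}(p_A)\|\mathrm{Bern}(p_B))$ (the increments $Z_i$ are bounded because $p_A,p_B$ are bounded away from $0$ and $1$), yields $D_{\mathrm{KL}}(\mathbb{Q}_A|_{\mathcal{F}_N}\|\mathbb{Q}_B|_{\mathcal{F}_N})=\mathbb{E}_{\mathbb{Q}_A}[N]\cdot d$ (vacuous if $\mathbb{E}[N]=\infty$). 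The same Pinsker step then gives $\mathbb{E}_{\mathbb{Q}_A}[N]=\Omega\big((\tfrac12-\varepsilon)^2 n^2/\MOV^2\big)$, and since reading one two-candidate vote costs $\Theta(1)$ time, the expected runtime under $P_A$ is $\Omega\big((\tfrac12-\varepsilon)^2 n^2/\MOV^2\big)$, which is the runtime claim. (When $\MOV>n/4$, the bound is $\Omega(1)$ and is immediate.)

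For the space bound, the algorithm must in expectation process $T=\Theta\big((\tfrac12-\varepsilon)^2 n^2/\MOV^2\big)$ samples, and must therefore maintain enough internal state to decide when to stop and which candidate to declare — at the very least a running counter, or more essentially a sufficient statistic of the sampled bits — which takes $\Omega(\log T)=\Omega\big(\log\tfrac{(\frac12-\varepsilon)^2 n^2}{\MOV^2}\big)$ bits; for the non-adaptive case this is literally the $\lceil\log T\rceil$-bit sample counter, and I would extend it to the general case by arguing that a correct decision rule here must track a statistic with $\Omega(T)$ distinguishable values.

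I expect the adaptivity to be the delicate point: getting the exact $(\tfrac12-\varepsilon)^2$ factor, rather than a weaker $(\tfrac12-\varepsilon)^3$-type bound, requires the sequential form of Pinsker/Wald rather than a crude Markov-truncation of the stopping time. The remaining ingredients — the Bernoulli-KL estimate, Pinsker's inequality, and bounding runtime below by the sample count — are routine two-point-method bookkeeping.
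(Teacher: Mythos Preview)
Your proof is correct but takes a genuinely different route from the paper. The paper models sampling with replacement on the single hard profile (\ref{equ:profile_2}) as transmission through a binary symmetric channel with crossover probability $\nlo/n$, computes its capacity $\cp=1-H(\nwi/n)\le 4\MOV^2/n^2$ via Tops{\o}e's entropy bounds, and then invokes Shannon's converse in the form ``recovering one bit with error $\varepsilon$ requires at least $(1-H(\varepsilon))/\cp$ channel uses in expectation'' to get $T=\Omega\big(n^2(\tfrac12-\varepsilon)^2/\MOV^2\big)$; the space bound is then asserted as $\Omega(\log T)$ with essentially the same informality as yours. Your argument is instead the Le Cam two-point method: two symmetric profiles, a total-variation lower bound $d_{\mathrm{TV}}\ge 1-2\varepsilon$ from correctness, a KL upper bound $O(T\,\MOV^2/n^2)$, and Pinsker to close. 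Both yield the same $(\tfrac12-\varepsilon)^2$ dependence. What your approach buys is self-containment---no black-box Shannon/Fano import, just elementary KL and Pinsker---and an explicit treatment of adaptive stopping through Wald's identity, which the paper simply absorbs into the phrase ``in expectation'' in its Shannon statement. What the paper's approach buys is a single-profile picture tied directly to channel capacity, which some readers may find conceptually cleaner.
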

    The proof of Lemma~\ref{lem:classical_m2} can be found in Appendix~\ref{app:proof}. Note that the $\varepsilon$ part of the bound in Lemma~\ref{lem:classical_m2} is not tight when $\varepsilon$ is close to $0$. This is because our bound is given by the information limit, which cannot be reached when reconstructing a limited number of bits. In particular, our problem only reconstructs 1-bit information (which candidate wins out of the two candidates).
    
    \textbf{Step 2.}  The lower bound for the fast voting algorithm for $m>2$ candidates cannot be smaller than the lower bound for the two-candidate case. 
    
    Suppose Theorem~\ref{thm:classical} is not true, and there exists a fast voting algorithm $A$ for voting rule $r$ such that for any profile $P$, $A$ has a runtime of $o\left(\frac{n^2\cdot\left(\frac{1}{2}-\varepsilon\right)^2}{\MOV^2}\right)$ and achieves $\Pr[A(P) \text{ is correct}] \ge 1 - \varepsilon$ (the reasoning for space complexity will be similar). And suppose $A$ takes a profile $P$ of $m$ candidates $a, b, c_3,\cdots, c_m$ as the input and sample votes from $P$ with replacement. We show that there exists a fast algorithm for majority voting $A'$ with the same runtime and accuracy. Let $P'$ be a voting profile for two candidates $a$ and $b$, and let $n_a$ and $n_b$ be the number of votes for $a$ and for $b$ respectively. We construct algorithm $A'$ as follows: 

    $A'$ is almost the same as $A$ except for the different input and sampling. $A'$ takes a profile $P'$ of two candidates $a$ and $b$ as the input. Whenever there needs a sample, $A$ samples a vote from $P'$. If the voter votes for $a$, then $A'$ convert it to $a \succ b \succ c_3\succ \cdots \succ c_m$; and if it votes for $b$, then $A'$ convert it to $b \succ a \succ c_3\succ \cdots \succ c_m$.
    If the winner calculated is neither $a$ nor $b$, $A'$ will set $a$ to the winner. 
    Except for the sampling, $A'$ running on $P'$ is equivalent to $A$ running on the following profile $P$ and set the same winner as $A$ does: there are $n_a$ votes of $a \succ b \succ c_3\succ \cdots \succ c_m$ and $n_b$ votes of $b \succ a \succ c_3\succ \cdots \succ c_m$.
    % \begin{algorithm}[htp]
    % \caption{Fast majority voting algorithm $A'$}\label{alg:oracle}
    % \begin{algorithmic}[1]
    % \STATE {\bfseries Inputs:} Profile $P'$. 
    % \STATE Construct profile $P$ for $m$ candidates based on $P'$.
    % \begin{itemize}
    %     \item The candidates are $a, b, c_3, \cdots, c_m$. 
    %     \item There are $n_a$ votes of $a \succ b \succ c_3\succ \cdots \succ c_m$ and $n_b$ votes of $b \succ a \succ c_3\succ \cdots \succ c_m$.
    % \end{itemize}
    % \STATE Run algorithm $A$ on $P$.  
    % \STATE If $A(P) = a$, set $a$ as the winner. Otherwise, set $b$ as the winner.  
    % \end{algorithmic}
    % \end{algorithm}

    If $a$ is the winner in $P'$, then $n_a \ge n_b$, and the margin of victory for the majority vote is $\MOV' = \frac12(n_a - n_b)$. Then it is not hard to verify that when $r$ is one of the rules mentioned in the statement, $a$ is also the winner in $P$, and the margin of victory in $P$ is $\MOV = \MOV'$. Therefore, $A$ set the winner as $A$ with probability at least $1 - \varepsilon$ under runtime $o\left(\frac{n^2\cdot\left(\frac{1}{2}-\varepsilon\right)^2}{\MOV^2}\right)$. Then $A'$, with the same operation to $A$, will also set the winner as $A$ with probability at least $1 - \varepsilon$ under the same runtime. Similarly, if $b$ is the winner, $A'$ can output the correct winner with probability at least $1 - \varepsilon$ under the same runtime. Therefore, $A'$ is a fast majority voting algorithm based on sampling with a replacement that can achieve $\Pr[\text{correct}] \ge 1 - \varepsilon$ under runtime $o\left(\frac{n^2\cdot\left(\frac{1}{2}-\varepsilon\right)^2}{\MOV^2}\right)$, which contradicts with Lemma~\ref{lem:classical_m2}. 
    % Consider a set of voting profiles such that for each profile $P$ in the set: for some pair of candidate $(a, b)$,
    % \begin{enumerate}
    %     \item there are $(\lfloor n/2\rfloor + \MOV)$ votes of $a\succ b\succ X$, and
    %     \item there are $(\lceil n/2\rceil - \MOV)$ votes of $b\succ a\succ X$,
    % \end{enumerate}
    % where $X$ denotes all other candidates with arbitrary order. It's not hard to verify that the margin of victory of any such profile $P$ is $\MOV$ in scoring rule, Copeland, maximin, and STV.
\end{proof}

% \ao{Can Qishen use our notation system?}

% When $n\to \infty$, $H(n, \hist_j, T)$ also converges to $\mathcal{N}(\hist_j, \hist_j \cdot (1-\hist_j/n))$. Therefore, the samples with replacement and without replacement will converge as $n\to \infty$. 

\begin{figure*}[ht]
    \centering
    \includegraphics[width = 0.995\textwidth]{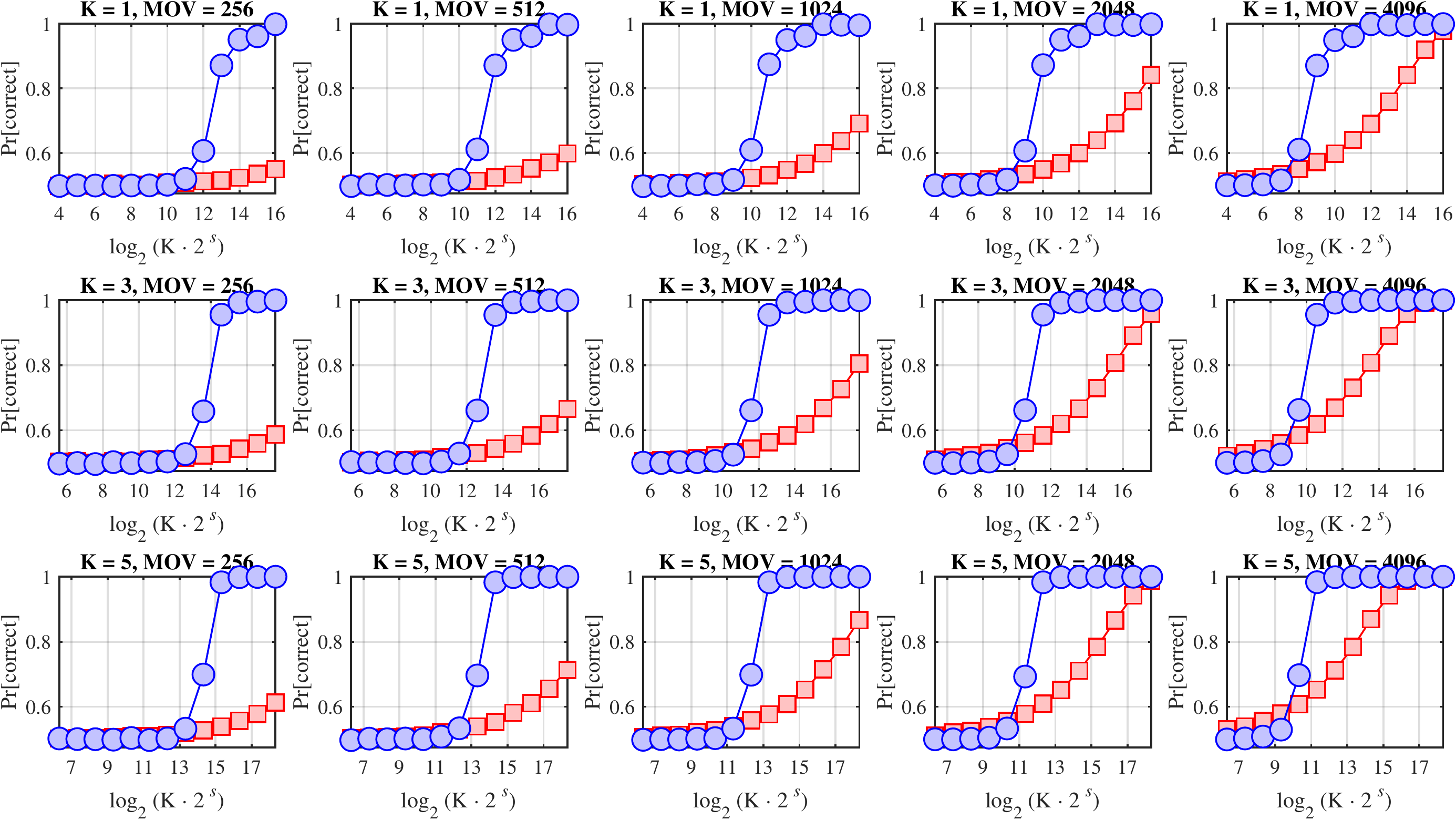}
    \caption{Compare quantum-accelerated voting (blue circles) with classical fast voting (red squares) for plurality, Borda, Copeland, and STV when $m=2$. The horizontal axis can be seen as the logarithm of the algorithms' runtime.}
    \label{fig:plu_m_2}
\end{figure*}

\section{Experimental Results}
\label{sec:exp}
%\lirong{maybe say a few words about the simulation environment for quantum computation?}
\noindent\textbf{Basic settings. } We numerically compare the proposed quantum-accelerated voting (Algorithm~\ref{alg:quantum_multi}) with a fast classical voting based on sampling with replacement (Algorithm~\ref{alg:classical_multi}). We set the number of samples $T$ in Algorithm~\ref{alg:classical_multi} to be $K\cdot 2^s$, where $s$ and $K$ are the parameters of Algorithm~\ref{alg:quantum_multi}. By doing this, the runtime of both algorithms is $\Theta(K\cdot 2^s)$.  We set the number of voters $n = 2^{20} \approx 10^6$, which is at a similar order of magnitude as the number of voters in a typical state of the United States. For example, the number of registered voters in New Hampshire is 1,009,004 $\approx$ $10^6$~\citep{nh2020}. We compare Algorithm~\ref{alg:quantum_multi} and Algorithm~\ref{alg:classical_multi} on four widely-used voting rules, which are two scoring rules (plurality and Borda), a pairwise rule (Copeland), and an elimination-based rule (STV). The formal definitions of the four rules can be found in Section~\ref{sec:prelim}. Note that all four rules  are covered by Theorem~\ref{thm:classical}. The implementation details can be found in Appendix~\ref{app:detail_exp}. %In Figure~\ref{fig:plu_m_2}. we set $\MOV = \sqrt{n} = 2^{10} \approx 10^3$. Or equivalently, one candidate got $2^{19}+2^{10} = 525,312$ votes while the other candidate got $2^{19}-2^{10} = 523,264$ votes. Figure~\ref{fig:plu_m_4} sets $\MOV = 2\sqrt{n} = 2^{11} \approx 2\times 10^3$ \\

%\vspace{-0.8em}
\begin{algorithm}[htp]
\caption{Fast Classical Voting Algorithm}\label{alg:classical_multi}
\begin{algorithmic}[1]
\STATE {\bfseries Inputs:} $n$ voters' votes $\vv_0, \cdots, \vv_{n-1}$, a voting rule $r$, number of samples $T$
\STATE Sample $T$ votes uniformly at random with replacement
\STATE Build the histogram $\hhist$ of the sampled votes. 
\STATE {\bfseries Output} $r(\hhist)$ as the winner. 
\end{algorithmic}
\end{algorithm}
%\vspace{-1em}

%For quantum majority voting, we directly calculate the probability of outputting the correct winner $\Pr[\text{correct}]$ through equation (5.26) in~\citet{nielsen2002quantum}, which is the output distribution of quantum counting algorithm. The classical majority algorithm's $\Pr[\text{correct}]$ is calculated from the distribution of $T_1$ (the number of sampled votes for the first candidate, follows binomial distribution). Our experiment also plots a lower bound for quantum majority voting (Inequality (\ref{equ:quantum_k_1}) for $K=1$ and Inequality (\ref{equ:quantum_k_large}) for $K > 1$) and an asymptotic bound for classical majority voting (Inequality (\ref{equ:classic_lower1})). As mentioned above, none of our experimental results rely on random sampling. Thus, the curves in Figure~\ref{fig:plu_m_2} and Figure~\ref{fig:plu_m_4} have no randomness (thus have no error bar on them). All experiments of this paper are implemented through MATLAB 2022b and run on a Windows 11 desktop with AMD Ryzen 9 5900X CPU and 32GB RAM. \\
\noindent\textbf{Detailed settings. } We use random tie-breaking to break ties for all voting rules above. For example, if $c_1$ and $c_2$ are tied, each of them will win with $1/2$ probability. We set the number of candidates $m \in \{2,4\}$ for all rules.  %In all experiments of this paper, we first draw $10^5$ independent trails of the sampled profile and calculate the winner for each trail. 
In all experiments of this paper, we estimate the probability of outputting the correct winner ($\Pr[\text{correct}]$) by the frequency of observing the correct winner in $10^5$ independent trails.  We set $K \in \{1,3,5\}$ to avoid ties in the classical part of quantum-accelerated voting algorithm. We set $\MOV \in \{256, 512, 1024, 2048, 4096\}$, or equivalently, $\MOV \in \{n^{0.4}, n^{0.45}, n^{0.5}, n^{0.55}, n^{0.6}\}$. Since all four rules reduce to the majority rule when there are only two candidates, we consider the same profile as (\ref{equ:profile_2}) for $m=2$. It's easy to check that the margin of victory of (\ref{equ:profile_2}) is $\MOV$ under all four rules above. Figure~\ref{fig:plu_m_2} compares quantum-accelerated voting with a classical fast voting algorithm (Algorithm~\ref{alg:classical_multi}). The horizontal axis of Figure~\ref{fig:plu_m_2}, $\log_2(K\cdot 2^s)$, can be seen as the logarithm of the algorithms' runtime. For all curves, we set $s = 4,5,\cdots,16$ for the twelve points from left to right, respectively. The detailed settings and experimental results of $m=4$ are shown in Appendix~\ref{app:exp}.

\noindent\textbf{Observations. } First, with the same order of runtime, the quantum-accelerated voting algorithm has better $\prc$ than classical fast voting no matter under which setting. For example, in Figure~\ref{fig:plu_m_2}, $\MOV=1024$, $K=1$, and $s = 14$, the quantum-accelerated voting algorithm outputs the correct winner almost for certain. However, the classical algorithm only has $\sim$$60\%$ probability of outputting the correct winner. Second, quantum-accelerated voting requires much less runtime to achieve the same $\prc$. For example, in Figure~\ref{fig:plu_m_2}, $\MOV=1024$, and $K=1$, to achieve $\sim$$90\%$ $\Pr[\text{correct}]$, the quantum algorithm requires $2^{10}$ runtime. In comparison, the classical algorithm with $2^{16}$ runtime can only achieve $\sim$$70\%$ $\Pr[\text{correct}]$. Both observations match our theoretical result: the proposed algorithm is quadratically faster than any classical algorithm. %It is not surprising that the accuracy of quantum and classical voting both increase with runtime. %We also observed that the lower bound of $K=69$ and $K=139$ for quantum majority voting are looser than the $K=1$ case. We believe that this behavior is caused by the Chernoff bound, which is not asymptotically tight, used in Inequality~(\ref{equ:quantum_k_large}). We also note that providing an asymptotically tight tail bound for binomial distribution is too far from the main topic of this paper.

%We also observe that the quantum-accelerated voting algorithm's $\Pr[\text{correct}]$ may not monotonically increase with the increase of $\log_2(K\cdot 2^{\tone})$. For example, in Figure~\ref{fig:plu_m_4}, $K=1$, and $\MOV=256$, the $\Pr[\text{correct}]$ for $s=15$ is smaller than $s=14$. The non-monotonicity is not an uncommon phenomenon in quantum algorithms (\emph{e.g.,} \citealp{kerenidis2019q,chen2020low,bausch2020recurrent}). This phenomenon comes from the discrete manner of quantum noises, which differs from classical noise.  We also note that our theoretical analysis bounds the asymptotic manner of $\Pr[\text{correct}]$, instead of its monotonicity. Also see Appendix~\ref{app:discuss} for a slightly more technical discussion.

\iffalse
\begin{corollary}
    \label{coro:quantum_inverse}
    Algorithm~\ref{alg:quantum_multi} with runtime $T$ and $K =1$ has an accuracy of  $1-\Theta(\frac{n}{\MOV\cdot T})$. 
    
    (More specifically, given $T = \Theta(2^{\tone})$, and let $\delta = \frac{\MOV}{2^{t+1}\pi\cdot(m!)} - 2\cdot 2^{\tone}$, The lower bound of the accuracy is $1 - \frac{m!}{2\left( \frac{\MOV\cdot \Theta (T)}{2^{t+1}\pi\cdot(m!)}-3\right)}$.
\end{corollary}

\begin{corollary}
    \label{coro:classical_inverse}
    A classical sampling algorithm with runtime $T$ has an accuracy of $1-2^{\Theta(\frac{n^2}{\MOV^2\cdot T})}$
\end{corollary}
\fi

\section{Further accelerating quantum voting}
\label{sec:heuristic}
Can quantum-accelerated voting algorithm be further accelerated? This section discusses some heuristics. 

\noindent\textbf{Pre-sampling.} One way to improve the average performance of the quantum-accelerated voting algorithm is to pre-sample a small subset of the votes. For example, in the majority vote for binary candidates, if the pre-sample votes indicate an almost irreversible win of a candidate, then we directly announce the winner and skip the quantum computing. By carefully setting the pre-sampling size and the skipping thresholds, we may improve the average run-time while keeping high $\prc$. 

\noindent\textbf{Sampling + quantum.} Another natural idea is to apply the quantum-accelerated voting algorithm on a sampled subset of the votes. Sampling decreases the number of votes, so the quantum circuit consumes fewer bits and operators, which reduces the time and space cost. However, such improvement sacrifices $\prc$, as both sampling and quantum computing has a probability to make a mistake. It is still unclear if such a sampling-quantum algorithm would be faster to achieve the same level of $\prc$.

\section{Conclusions and Future Works}
In this paper, we took the first step in using quantum computation to accelerate voting. Our proposed quantum-accelerated voting algorithm can quadratically accelerate various widely-used voting rules and may potentially improve the efficiency of voting in large-scale and/or high-frequency decision-making scenarios. 

An extension of this paper is to further accelerate the proposed algorithm by combining existing acceleration techniques in classical fast voting algorithms. Since voting is widely used in artificial intelligence, it would also be interesting to apply the proposed methods to accelerate the algorithms in other fields (\emph{e.g.,} search engine, crowdsourcing, database management, and blockchain governance). %It would also be interesting to extend the theoretical guarantee to a broader range of voting rules, such as generalized scoring rules~\cite{xia2013generalized,ao2020private}, which contain most of the widely-used voting rules. %in real-world elections.%, and preference functions whose output is aggregated preference among all the candidates. 

As real quantum computers may come across quantum errors caused by quantum interference and environmental effects, another interesting extension is to test the robustness of the proposed algorithm. For example, testing its performance on real quantum computers and/or running experiments with quantum errors taken into account.
% A simple extension of this paper is applying quantum computation on biased majority voting, where the threshold of winning is not half-by-half. It would also be interesting to apply quantum computation to accelerate other widely used voting rules. For example, Borda, STV, Copeland, Ranked Pairs, or even generalized scoring rules~\cite{xia2013generalized,ao2020private}, which contain most of the widely-used voting rules in real-world elections.

\begin{acknowledgements}
We thank the anonymous reviewers for their helpful comments. Lirong Xia is supported by NSF \#1453542 and a gift fund from Google. %Ao Liu is supported by RPI-IBM AIHN scholarship.
\end{acknowledgements}

\bibliography{sample}

%%%%%%%%%%%%%%%%%%%%%%%%%%%%%%%%%%%%%%%%%%%%%%%%%%%%%%%%%%%%%%%%%%%%%%%%

%\clearpage
%\onecolumn

% \iffalse
% \section{Additional comparisons}
% The detailed comparison between this paper and related works is summarized in Table~\ref{tab:summary}
% \begin{table*}[htp]
%     \centering
%     \begin{tabular}{cccc}
%     \toprule
%          & quantum communication cost & runtime after the last vote & space requirement\\ \hline
%         \textbf{This Work} &  \bm{$0$} & \bm{$\Theta\left(\frac{n}{\varepsilon\cdot\MOV}\right)$} & \bm{$O\left(\log(\frac{n}{\varepsilon\cdot\MOV})\right)$}\\
%         \citet{vaccaro2007quantum} & $\Theta(n)$ & unclear & $\Theta(n)$ \\
%         \citet{xue2017simple} & $\Theta(n)$ &  unclear  & $\Theta(n)$    \\
%         \citet{khabiboulline2021efficient} & $\Theta(n^4\cdot\log n)$ &  $\Omega(n\cdot\log n)$ (unclear)  & $\Theta(n\cdot \log n)$    \\
%         Classical Majority Voting &  $0$ & $\Theta\left(\frac{n^2}{\varepsilon^2\cdot\MOV^2}\right)$ & $\Theta\left(\log(\frac{n^2}{\varepsilon^2\cdot\MOV^2})\right)$\\
%     \bottomrule
%     \end{tabular}
%     \caption{The main contribution of this work and comparison with literature about quantum majority voting. All algorithms in the table assume the algorithm output the winner with no less than $1-\varepsilon$ probability.}
%     \label{tab:summary}
% \end{table*}
% \fi

\clearpage
\begin{center}
    \textbf{\large The Appendix of UAI-23 Accepted Paper}\\ 
    \textbf{\large Accelerating Voting by Quantum Computation}
\end{center}
\appendix
\section{Implementation of Quantum Counting Algorithm.}
\label{apx:quantum}
In this section, we aim to introduce the implementation of the quantum part of Algorithm~\ref{alg:quantum_multi} from a more technical perspective. We will first introduce the basics of quantum computing. Then we will specify the implementation of circuits of quantum counting in Algorithm~\ref{alg:quantum_multi}, and why they accelerate the voting process. 

\subsection{Quantum Basics.}
\label{apx:basic}
\noindent\textbf{Basic quantum computation.}
Quantum bit (or \emph{qubit} in short) is the counterpart of classical \emph{bit}, which takes a deterministic binary from $\{0,1\}$. Qubit, on the other hand, is represented by a linear combination of $\{\zero,\one\}$, which are counterparts to $\{0,1\}$, respectively. That is, every qubit  $|\psi\rangle$ is written as 
\begin{equation}\nonumber
|\psi\rangle = \alpha\zero + \beta\one,    
\end{equation}
where $\alpha$ and $\beta$ are complex numbers and are usually called amplitudes. If we measure the qubit, there is $|\alpha|^2$ probability to get $0$ and $|\beta|^2$ probability to get $1$. Naturally, we always have $|\alpha|^2+|\beta|^2 = 1$ because the probabilities should sum to $1$. Qubits sometimes are written as vectors to simplify notations. Formally, 
\begin{equation}\nonumber
\begin{bmatrix}
\alpha\\
\beta
\end{bmatrix} \triangleq \alpha\zero + \beta\one.   
\end{equation}
$t > 1$ qubits are presented as a $2^t$-dimensional vector, where the $j$-th component of the vector (denoted as $\alpha_j$) represents the amplitude of $|j_1\cdots j_t\rangle$ (or $|j\rangle$), where $j_1\cdots j_t$ is the binary representation of $j$. Similar to the 1-qubit case, the probability of  observing $j_1,\cdots,j_t$ from those $t$ qubit equals to $|\alpha_j|^2$. %The phase of $\alpha_t$ (in Hilbert space) is called the phase of state $j_1\cdots j_t$. 

A quantum operation (quantum gate) $Q$ on $t$ qubits is denoted by a $2^t\times 2^t$ unitary matrix, which means the matrix's inverse is its Hermitian conjugate. Applying a quantum operation $Q$ on quantum state $|\psi\rangle$ is denoted by
\begin{equation}\nonumber
Q|\psi\rangle \triangleq \boldsymbol{Q}_{(2^t\times 2^t)} \; \vec{\psi}_{(2^t)},
\end{equation}
where the the quantum operator $\boldsymbol{Q}_{(2^t\times 2^t)}$ is a $2^t\times 2^t$ unitary matrix and the quantum state $\vec{\psi}_{(2^t)}$ is a $2^t$ dimensional column vector.\\

\noindent\textbf{Quantum circuit of some useful quantum operators.\footnote{All quantum circuits of this paper are drawn using the Quantikz package~\citep{kay2018tutorial} for \LaTeX.}  } Quantum circuits run from the left-hand side to the right-hand side. For example, the following circuit means applying Hadamard gate $H$ on a quantum state $|\psi\rangle$.
\begin{equation}\nonumber
\begin{quantikz} 
& \ket{\psi}\; & \gate{H} & \qw 
\end{quantikz} \qquad \text{ where } \boldsymbol{H} = \frac{1}{\sqrt 2} \begin{bmatrix}
1 & 1\\
1 & -1
\end{bmatrix}.
\end{equation}

The quantum circuit notion
\begin{equation}\nonumber
\begin{quantikz} 
\ket{\psi}\; \qw & \meter{0/1} &  \measuretab{b}
\end{quantikz}
\end{equation}
denotes measuring quantum state $|\psi\rangle$ with $0/1$ base ($b$ denotes the result of measurement). Naturally, the complexity of quantum measurement and Hadamard gate are both $\Theta(1)$.

Quantum oracle~\citep{berthiaume1994oracle,van1998quantum,kashefi2002comparison} is a widely-used operator to encode binary functions or binary information. Given $t$ qubits and a binary function $f:\{0,\cdots,2^{t}-1\}\mapsto \{0,1\}$, quantum oracle (based on function $f(\cdot)$) applies a phase shift of $-1 = e^{\pi i}$ if $f(x) = 1$ and does nothing otherwise. We can query oracle many times and regard the number of queries as the cost \citep{10.1145/237814.237866}. Formally,
\begin{equation}\nonumber
\left\{
\begin{array}{ll}
O_f\x = \x  & \text{if } f(x) = 1 \\
O_f\x = -\x & \text{otherwise}
\end{array}
\right..
\end{equation}

Suppose we have a quantum gate $G$ on $t$ qubits. The following operation is called \emph{controlled-$G$}.
\begin{equation}\nonumber
\begin{quantikz} 
\qw & \ctrl{1} & \qw \\
\qw & \gate{G} & \qw 
\end{quantikz} = \begin{bmatrix}
\boldsymbol{I}_{(2^t\times 2^t)} & \boldsymbol{0}_{(2^t\times 2^t)}\\
\boldsymbol{0}_{(2^t\times 2^t)} & \boldsymbol{G}_{(2^t\times 2^t)}
\end{bmatrix},
\end{equation}
where $\boldsymbol{I}$ denotes the identity matrix, and $\boldsymbol{0}$ denotes the zeros matrix. %The controlled-$G$ gate can be implemented within the same complexity as the $G$ gate, which is $\Theta(1)$. We assume that the controlled-$G$ gate, once constructed, can be used for arbitrary times. 
To simplify notations, we also write
\begin{equation}\nonumber
\begin{quantikz} 
& \ctrl{1} & \qw \\
& \gate{G^a} & \qw 
\end{quantikz} = \begin{quantikz} 
& \ctrl{1} & \qw \ \ldots \ \qw & \ctrl{1} & \qw\\
& \gate{G} & \qw \ \ldots \ \qw & \gate{G} & \qw
\end{quantikz}\;{\text{(repeat } a \text{ times)} }.
\end{equation}

\subsection{Implementation of Quantum Counting Circuit.}
\label{apx:circuit}
Figure~\ref{fig:qc2} shows the quantum counting circuit, which is a combination of Grover search algorithm~\citep{10.1145/237814.237866} and quantum reverse
Fourier transformation (the $QFT^{\dagger}$ operator) Followings we focus on introducing Grover algorithm and why it accelerates the computation. 

\begin{figure*}[htp]
\begin{quantikz}
\lstick[wires=4]{Register 1\\$\tone$ qubits} & \ket{0}\;\, &  \gate{H} & \qw & \ctrl{4}& \qw & \qw \  \ldots \ & \qw & \gate[4]{QFT^{\dagger}} & \meter{0/1} & \measuretab{b_{1}}\\
& \ket{0}\;\, & \gate{H} & \qw & \qw  & \ctrl{3} & \qw  \ \ldots \ & \qw & \qw & \meter{0/1} &  \measuretab{b_{2}}\\
& \vdots & \vdots & & &   & \vdots &  & \qwbundle[alternate]{} & \qwbundle[alternate]{} & \vdots \\
& \ket{0}\;\, & \gate{H} &  \qw & \qw & \qw  & \qw  \ \ldots \ & \ctrl{1} & \qw &  \meter{0/1} &  \measuretab{b_{\tone}}\\
\lstick[wires=3]{Register 2\\$t$ qubits} & \ket{0}\;\, & \gate{H} & \qw & \gate[3,bundle={2}]{G^{2^{0}}}  & \gate[3,bundle={2}]{G^{2^{1}}}  & \qw \ \ldots \ & \gate[3,bundle={2}]{G^{2^{\tone-1}}}  & \qw &  \rstick[wires=3]{trash}\\
& \vdots & \vdots & & &   & \qwbundle[alternate]{} \,\ldots\, & & \qwbundle[alternate]{} \\
& \ket{0}\;\, & \gate{H} & \qw &   &  & \qw \ \ldots \ &   & \qw &  \\
\end{quantikz}
\caption{The circuit for quantum counting algorithm.}\label{fig:qc2}
\end{figure*}
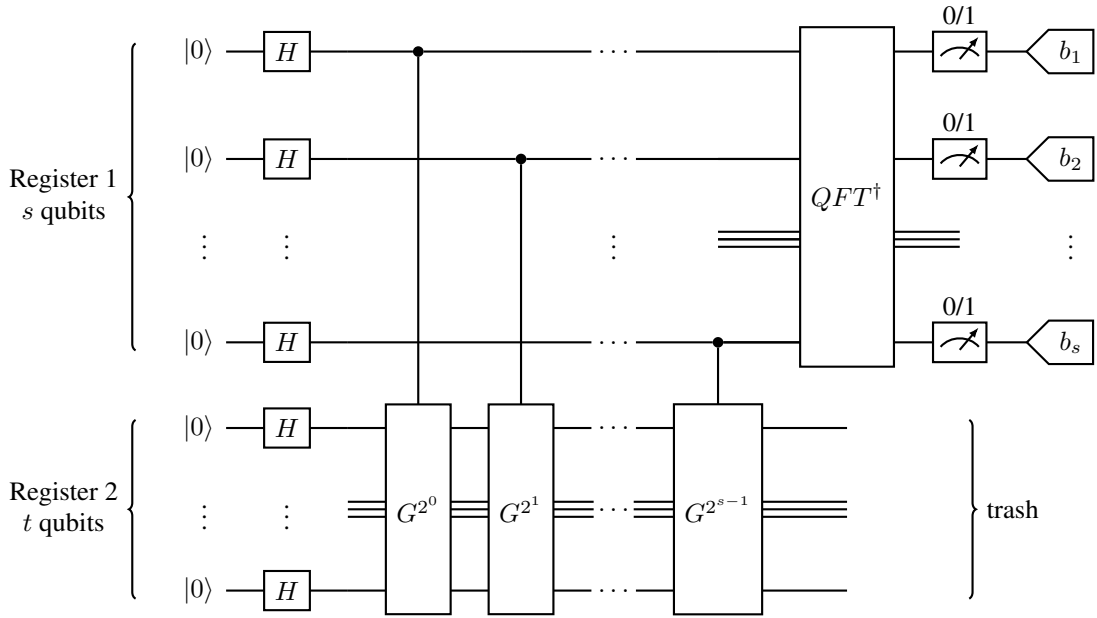

\begin{figure}[htp]
\begin{quantikz}
\lstick[wires=3]{$t$ qubits}   & \gate[3,bundle={2}]{O_{f_j}} & \gate{H} & \gate[3,bundle={2}]{\qps} & \gate{H}& \qw\\ %2\ket{0^{t}}\bra{0^{t}}-I_{t}
 &  & \qwbundle[alternate]{} \;\; \vdots\;\;  & & \qwbundle[alternate]{} \;\;\vdots \;\; & \qwbundle[alternate]{} \\
  & & \gate{H} & & \gate{H} & \qw\\
\end{quantikz}
\caption{The circuit for Grover operator.}\label{fig:qc1}
\end{figure}
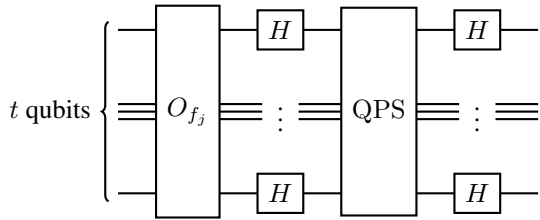

\noindent\textbf{Grover operator.} Grover algorithm is an efficient search algorithm. Given a binary function $f: \{0,1,\cdots, 2^t - 1\} \to \{0, 1\}$, Grover algorithms returns an $x$ with $f(x) = 1$ with high probability. The Grover operation in Algorithm~\ref{alg:quantum_multi} is constructed by the quantum circuit in Figure~\ref{fig:qc1}, where $t = \lceil\log n\rceil$ denotes the minimum number of quantum bits to encode $n$.
The quantum operator $\qps$ is called quantum phase shifting, which provides a phase shift of $-1$ on every state except $|0\rangle$. Mathematically,
\begin{equation}\nonumber
\begin{split}
& \zero \stackrel{\qps}{\longrightarrow} \zero\qquad\text{ and }\\
& \x \stackrel{\qps}{\longrightarrow} -\x\; \text{ for any } x \in{1,\cdots,2^{t}-1}.
\end{split}
\end{equation}
Here, $\x$ represents the $x$-th base state of the $t$ qubits. The high-level idea of Grover operator's functionality is shown in Figure~\ref{fig:grover}, where $|\psi\rangle$ is the input of Grover operators in quantum counting, and $\{|\alpha\rangle, |\beta\rangle\}$ is a pair of orthogonal bases. The formal definition of $|\psi\rangle$, $|\alpha\rangle$, and $|\beta\rangle$ can be found in Appendix~\ref{app:add}. Under the $|\alpha\rangle$ $|\beta\rangle$ base, the quantum oracle $O_{f_j}$ reflects $|\psi\rangle$ over $|\alpha\rangle$, while the rest parts of $G$ reflects $O_{f_j}|\psi\rangle$ over $|\psi\rangle$. The angle between the output state $G|\psi\rangle$ and initial state $|\psi\rangle$
\begin{equation}\nonumber
\theta = 2\arcsin\left(\sqrt{\hist_j\cdot 2^{-t}}\right),
\end{equation}
which includes the information about $\hist_j$. Since function $\arcsin(\sqrt{x})$ grows quadratically faster than linear functions when $x$ is small, we expect that an estimation about $\arcsin(\sqrt{x})$ could be quadratically more accurate than directly estimate $x$. \\

\begin{figure}[htp]
    \centering
    \includegraphics[width = 0.378\textwidth]{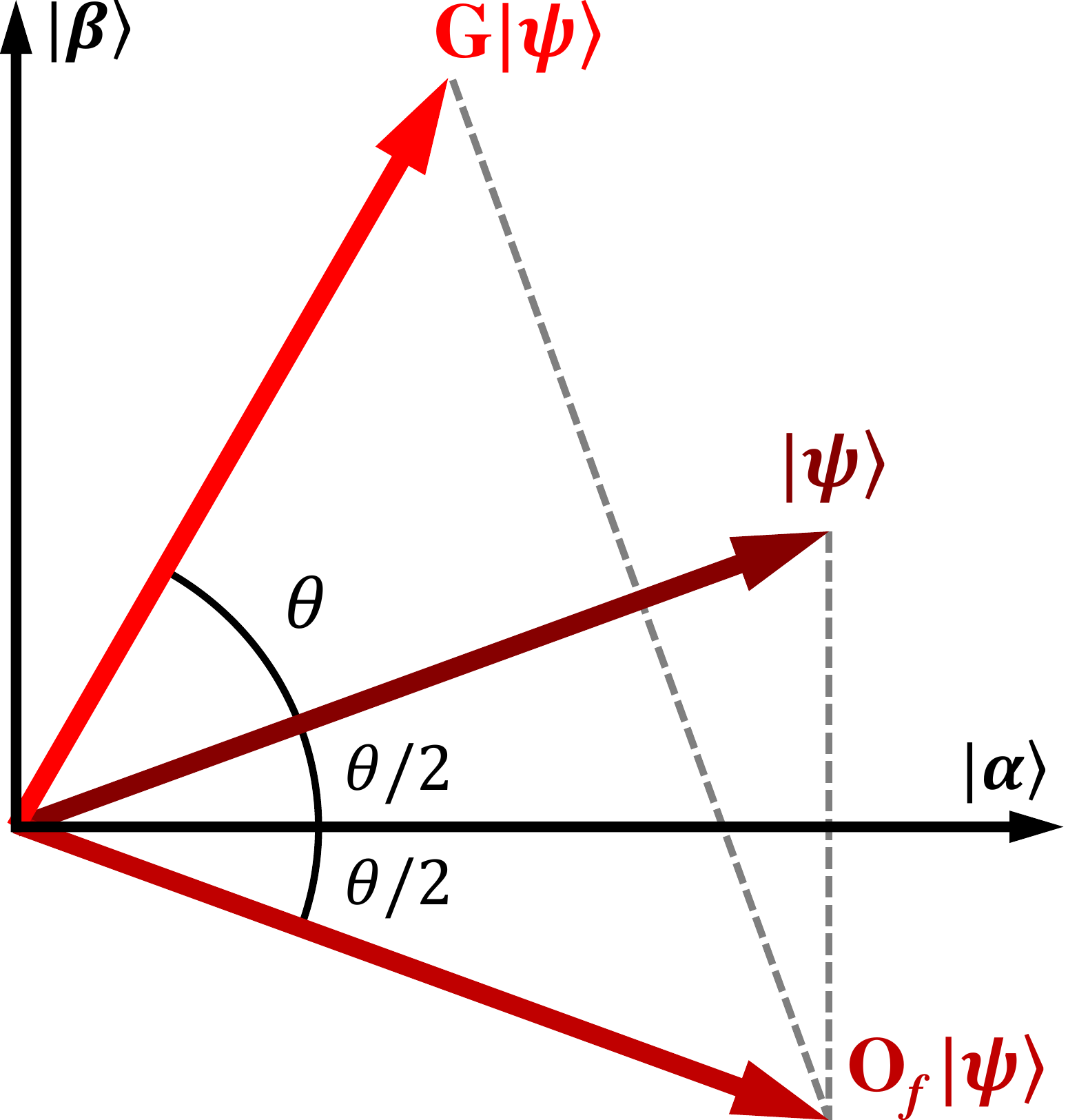}
    \caption{An illustration of Grover operator's functionality (Figure 6.3 in~\citet{nielsen2002quantum}).}
    \label{fig:grover}
\end{figure}

\subsection{Functionality for Grover algorithm}\label{app:add}
According to (6.4) in~\citet{nielsen2002quantum}, Hadamard gate changes $t$ qubits of $\zero$ to an equal superposition state (equal probability of observing any outcome under quantum measurements).
\begin{equation}\nonumber
|\psi\rangle = \frac{1}{2^{t/2}}\cdot \sum_{x=0}^{2^t-1} |x\rangle.
\end{equation}
Letting $f:\{0,\cdots,2^{t}-1\}\mapsto \{0,1\}$ be the binary function to construct the quantum oracle, and $\hn_1$ be the number of $x$ such that $f(x) =1$. The orthogonal bases $|\alpha\rangle$ and $|\beta\rangle$ are defined as,
\begin{equation}\nonumber
\begin{split}
|\alpha\rangle &\triangleq \frac{1}{\sqrt{2^t-\hn_1}}\cdot \sum_{x:f(x) = 0} |x\rangle\qquad\text{and}\\
|\beta\rangle &\triangleq \frac{1}{\sqrt{\hn_1}}\cdot \sum_{x:f(x) = 1} |x\rangle.
\end{split}    
\end{equation}
Under the $|\alpha\rangle$ $|\beta\rangle$ base,  the equal superposition state
\begin{equation}\nonumber
|\psi\rangle = \sqrt{\frac{2^t-\hn_1}{2^t}} 
\;|\alpha\rangle + \sqrt{\frac{\hn_1}{2^t}}\;|\beta\rangle.
\end{equation}
Since 
\begin{equation}\nonumber
\theta = 2\arcsin\left(\sqrt{\hn_1\cdot 2^{-t}}\right),
\end{equation}
we have
\begin{equation}\nonumber
\begin{split}
|\psi\rangle &= \cos\left(\frac{\theta}{2}\right)
\,|\alpha\rangle + \sin\left(\frac{\theta}{2}\right)\,|\beta\rangle, \\
O_f|\psi\rangle &= \cos\left(\frac{\theta}{2}\right)
\,|\alpha\rangle + \sin\left(-\frac{\theta}{2}\right)\,|\beta\rangle, \text{ and}\\
G|\psi\rangle &= \cos\left(\frac{3\theta}{2}\right)
\,|\alpha\rangle + \sin\left(\frac{3\theta}{2}\right)\,|\beta\rangle.
\end{split}    
\end{equation}

% \section{Proof for Proposition~\ref{prop:hypergeo}}
% \label{app:hypergeo}
% \textit{Let $X$ be a random variable that follows hypergeometric distribution $H(N, p\cdot N, T)$, and $Y$ be a random variable that follows binomial distribution $\mathcal{B}(T, p)$. Suppose $T$ and $p \in (0, 1)$ are fixed. Then for all $k = 0,1,\cdots, T$, $\lim_{n\to\infty} P(X = k) = P(Y = k)$.}

% \begin{proof}
%     Let $q =1-p$. The formula of $P(X = k)$ is 
%     \begin{equation}\nonumber
%     \begin{split}
%      &\ P(X = k)
%         = \frac{\binom{p N}{k}\binom{q N}{T-k}}{\binom{N}{T}}\\
%         = &\ \frac{(p N)!}{k!(p N - k)!}\cdot\frac{(q N)!}{(T-k)!(q N - T+k)!}\cdot\frac{T!(N-T)!}{N!}\\
%         = &\ \binom{T}{k}\cdot \frac{(pN)!}{(pN-k)!}\cdot \frac{(qN)!}{(qN-T+k)!}\cdot \frac{(N-T)!}{N!}.
%     \end{split}
%     \end{equation}
%     When $T$ and $p$ (therefore, $ q$) are fixed, we have 
% \end{proof}

\section{Missing proofs and discussions}
\subsection{Missing Proof for Lemma~\ref{lem:classical_m2}}\label{app:proof}

\textbf{Lemma~\ref{lem:classical_m2}.}
\emph{Given $\varepsilon\in(0,0.5]$, any fast (2-candidate) majority voting algorithm based on sampling with replacement requires at least $\Omega\left(\frac{n^2\cdot\left(\frac{1}{2}-\varepsilon\right)^2}{\MOV^2}\right)$ runtime and at least $\Omega\left(\log\big(\frac{n^2\cdot\left(\frac{1}{2}-\varepsilon\right)^2}{\MOV^2}\big)\right)$ space to achieve $\prc\geq 1-\varepsilon$.}
\begin{proof}%[Proof~of~Lemma~\ref{lem:classical_m2}]
For majority voting (when $m=2$), the corresponding profile with margin of victory $\MOV$ is 
\begin{equation}\label{equ:profile_2}
\left\{
\begin{array}{l}
\nwi = (\lfloor n/2\rfloor+\MOV) \text{ votes for the winner}\\
\\
\nlo = (\lceil n/2\rceil-\MOV) \text{ votes for the loser}
\end{array}
\right..
\end{equation}
Figure~\ref{fig:channel} interprets the sampling (with replacement) process as a communication problem. We (the receiver) get a noisy data point about the winner from the sampling process. According to the above profile, we get the correct winner with $\frac{\nwi}{n}$ probability and get the incorrect winner with $\frac{\nlo}{n} = 1-\frac{\nwi}{n}$ probability. This sampling process is equivalent to the noisy communication channel in Figure~\ref{fig:channel}, which gives the correct binary message with $\frac{\nwi}{n}$ probability.

\begin{figure}[ht]
    \centering
    \includegraphics[width = 0.48\textwidth]{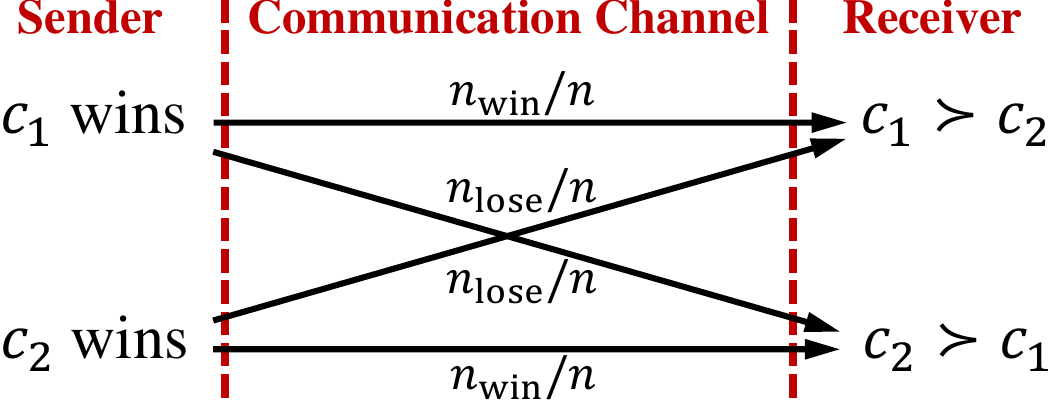}
    \caption{The communication channel presentation of sampling with replacement.}
    \label{fig:channel}
\end{figure}

According to Equation (1.35) in \citet{mackay2003information}, the capacity of the above communication channel $\cp = 1-H(\nwi/n)$, where $H:(0,1)\to(0,1]$ denotes the binary entropy function. Mathematically,
\begin{equation}\nonumber
    H(p) \triangleq -p\log(p)-(1-p)\log(1-p).
\end{equation}
\begin{proposition}
[$H(p)$'s Bounds, Theorem 1.2 in \citep{topsoe2001bounds}]\label{prop:H}
    Given any $p\in(0,1)$,
    \begin{equation}\nonumber
       4p(1-p) < H(p) < \big(4p(1-p)\big)^{1/\ln4}.
    \end{equation}
\end{proposition}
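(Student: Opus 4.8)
To prove Proposition~\ref{prop:H}, the plan is to reduce both inequalities to one‑variable estimates via the substitution $p=\tfrac{1+x}{2}$ (so $1-p=\tfrac{1-x}{2}$ and $x=2p-1\in(-1,1)$), using the symmetry $H(p)=H(1-p)$ to assume $x\in[0,1)$. Expanding the base‑$2$ binary entropy about $p=\tfrac12$ and inserting the series $\ln(1\pm x)=\pm x-\tfrac{x^2}{2}\pm\cdots$ gives
\[
H(p)\;=\;1-\frac{1}{2\ln 2}\big[(1+x)\ln(1+x)+(1-x)\ln(1-x)\big]\;=\;1-\frac{1}{2\ln 2}\sum_{k\ge 1}\frac{x^{2k}}{k(2k-1)} ,
\]
while $4p(1-p)=1-x^2$ and the $k=1$ term of the series equals $x^2$. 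So the whole statement becomes a comparison of $1-x^2$, the series expression, and $(1-x^2)^{1/\ln 4}$ on $[0,1)$; note all three coincide (equal $1$) at $x=0$, i.e.\ at $p=\tfrac12$, so the inequalities can be strict only on $(0,1)\setminus\{\tfrac12\}$.

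For the \emph{lower bound} I would just compare term by term: since $x^{2k}\le x^2$ for every $k\ge1$ and $x\in[0,1)$,
\[
\sum_{k\ge1}\frac{x^{2k}}{k(2k-1)}\;\le\;x^2\sum_{k\ge1}\frac{1}{k(2k-1)}\;=\;2\ln 2\cdot x^2 ,
\]
the last equality coming from the telescoping $\tfrac1{k(2k-1)}=\tfrac2{2k-1}-\tfrac1k$ (partial sums $=2(\mathcal{H}_{2N}-\mathcal{H}_N)\to 2\ln 2$). Substituting yields $H(p)\ge 1-x^2=4p(1-p)$, and $x^{2k}<x^2$ is strict for $k\ge2$ whenever $x\in(0,1)$, which gives strictness for all $p\ne\tfrac12$.

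For the \emph{upper bound}, taking logarithms and using $\tfrac1{\ln 4}=\tfrac1{2\ln 2}$, the claim $H(p)\le(1-x^2)^{1/\ln 4}$ is equivalent to
\[
\ln\!\Big(1-\tfrac{1}{2\ln 2}\sum_{k\ge1}\tfrac{x^{2k}}{k(2k-1)}\Big)\;\le\;-\tfrac{1}{2\ln 2}\sum_{k\ge1}\tfrac{x^{2k}}{k} ,
\]
i.e.\ to $W(x):=\ln\!\big(1-c\,u(x)\big)+c\,v(x)\le 0$ on $(0,1)$, where $u(x)=(1+x)\ln(1+x)+(1-x)\ln(1-x)$, $v(x)=-\ln(1-x^2)$ and $c=\tfrac1{\ln 4}$. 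I would verify $W(0)=W'(0)=0$ and $W(x)\to-\infty$ as $x\to1^-$, and then use the term‑by‑term fact $u(x)<v(x)$ together with a sign analysis of $W'$ to conclude $W$ is never positive (one can check the $x^4$‑coefficient of $W$ at $0$ is the negative number $c(2-3c)/6$). This sharp estimate is exactly Theorem~1.2 of \citep{topsoe2001bounds}, so in practice I would simply invoke it; equality again holds only at $p=\tfrac12$.

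The easy part is the lower bound — a single termwise comparison once the power series is written down. The hard part is the upper bound: the quadratic Taylor coefficients of $H(p)$ and of $(4p(1-p))^{1/\ln 4}$ at $p=\tfrac12$ are equal, so the inequality is tight to second order and no one‑step estimate suffices (for instance $\ln t\le t-1$ points the wrong way here, and $u\le v$ alone is not enough to chain to the desired bound). Controlling all higher‑order terms simultaneously is precisely the content of the cited result, which is why I would rely on it rather than reprove it from scratch.
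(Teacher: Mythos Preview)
The paper does not prove Proposition~\ref{prop:H} at all: it is quoted verbatim as Theorem~1.2 of \citet{topsoe2001bounds} and used as a black box inside the proof of Lemma~\ref{lem:classical_m2}. So there is no ``paper's own proof'' to compare against; any argument you give here is strictly additional to what the paper does.

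On the substance of your sketch: the lower-bound argument is clean and complete. The substitution $x=2p-1$, the series identity $H(p)=1-\frac{1}{2\ln 2}\sum_{k\ge1}\frac{x^{2k}}{k(2k-1)}$, the termwise bound $x^{2k}\le x^2$, and the evaluation $\sum_{k\ge1}\frac{1}{k(2k-1)}=2\ln 2$ are all correct, and you are right that strictness fails precisely at $p=\tfrac12$ (so the proposition as stated with strict inequalities on all of $(0,1)$ is slightly misquoted; the paper only ever applies it at $p\neq\tfrac12$, so this does not affect the downstream argument).

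The upper-bound part, however, is not a proof but a program. You correctly identify the difficulty (the second-order Taylor coefficients at $p=\tfrac12$ match, so any argument must control all higher-order terms), you set up the auxiliary function $W$, check $W(0)=W'(0)=0$ and the sign of the fourth-order coefficient, but then stop short of the actual sign analysis that would pin down $W\le0$ on the whole interval. Noting that $u<v$ and that $W''(0)<0$ does not by itself exclude $W$ becoming positive somewhere in $(0,1)$, and you say as much by deferring to Tops{\o}e. That is a perfectly reasonable editorial choice---it is exactly what the paper does---but you should be aware that your write-up proves only the lower half of the proposition and cites the upper half, rather than proving the proposition outright.
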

With the lower bound in Proposition~\ref{prop:H}, we know the communication channel's capacity
\begin{equation}\nonumber
\begin{split}
\cp &= 1-H(\nwi/n) \leq 1-H(1/2+\MOV/n) \\
&< 1-4\cdot(1/2-\MOV/n)\cdot(1/2+\MOV/n)\\
&= 4\MOV^2/n^2.
\end{split}
\end{equation}
The ``$\leq$'' follows by the monotonicity of $H(p)$. %Since the sampling process is without replacement, \emph{i.e.}, each sample is $i.i.d.$, getting $T$ samples is equivalent to receiving samples from $T$ independent channels. 
The next proposition (the well-known Shannon's theorem) connects the channel capacity with the error probability of binary information.
\begin{proposition}[\citep{shannon1948mathematical}]\label{prop:shannon}
Given a communication channel with capacity $\cp$, reconstructing each single-bit message with error probability $\varepsilon\in(0,0.5]$ requires receiving at least $\frac{1-H(\varepsilon)}{\cp}$ bits (in expectation) from the channel.
\end{proposition}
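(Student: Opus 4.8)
The plan is to prove the claim as the \emph{converse} half of Shannon's noisy-channel coding theorem, specialized to a single-bit message, using the standard triad of Fano's inequality, the data-processing inequality, and the single-letterization of mutual information. Let $M\in\{0,1\}$ be the transmitted message, which I take to be uniformly distributed so that $H(M)=1$; let $X_1,\dots,X_N$ be the channel inputs chosen by the encoder, $Y_1,\dots,Y_N$ the corresponding outputs received by the decoder, and $\hat M$ the decoder's reconstruction. By construction these form a Markov chain $M\to X^N\to Y^N\to\hat M$, and the hypothesis is $\Pr[\hat M\neq M]\le\varepsilon\le 1/2$.

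First I would apply Fano's inequality, $H(M\mid\hat M)\le H\big(\Pr[\hat M\neq M]\big)+\Pr[\hat M\neq M]\cdot\log(|\{0,1\}|-1)$. Because the message alphabet has size two, the factor $\log(|\{0,1\}|-1)=\log 1=0$ kills the second term, and monotonicity of $H$ on $(0,1/2]$ together with $\Pr[\hat M\neq M]\le\varepsilon$ gives $H(M\mid\hat M)\le H(\varepsilon)$. Hence $I(M;\hat M)=H(M)-H(M\mid\hat M)\ge 1-H(\varepsilon)$. Next I would invoke the data-processing inequality along the chain above to obtain $I(M;\hat M)\le I(X^N;Y^N)$, and then single-letterize: for a memoryless channel the chain rule of mutual information and the definition of the capacity $\cp$ as the maximum per-use mutual information yield $I(X^N;Y^N)\le\sum_{i=1}^N I(X_i;Y_i)\le N\cdot\cp$. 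Chaining the three bounds gives $1-H(\varepsilon)\le N\cdot\cp$, i.e. $N\ge (1-H(\varepsilon))/\cp$, which is exactly the claimed bound for a fixed number of channel uses.

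The main obstacle is the phrase ``in expectation,'' which permits the number of received bits $N$ to be a random stopping time rather than a fixed constant, since a sampling algorithm may adaptively decide when to stop drawing samples. For deterministic $N$ the argument above is immediate, but for random $N$ the single-letterization step $I(X^N;Y^N)\le N\cdot\cp$ must be upgraded to a bound on the \emph{expected} accumulated information. I would handle this by a stopping-time argument: the per-use information increments accumulate to an expected total controlled by $\mathbb{E}[N]\cdot\cp$ through a Wald-type optional-stopping identity, or equivalently by truncating the transmission at a large deterministic horizon, applying the fixed-length bound, and passing to the limit. Since the Fano and data-processing steps are unaffected by the randomness of $N$, this replaces $N$ by $\mathbb{E}[N]$ and yields $\mathbb{E}[N]\ge (1-H(\varepsilon))/\cp$, completing the proof.
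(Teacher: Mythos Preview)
The paper does not prove this proposition; it simply invokes it as Shannon's converse and cites \citep{shannon1948mathematical}. Your argument via Fano's inequality, the data-processing inequality, and single-letterization is exactly the standard textbook derivation of that converse (specialized to a one-bit message), so your proposal is correct and there is nothing in the paper to compare it against beyond the citation.

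One small remark: you assume $M$ is uniform so that $H(M)=1$, which is the natural reading here and matches how the proposition is applied in the proof of Lemma~\ref{lem:classical_m2} (two symmetric profiles, one for each possible winner). Your handling of the ``in expectation'' clause via a Wald/optional-stopping argument is the right idea, though in the paper's application the number of samples is effectively treated as fixed, so the stopping-time refinement is not strictly needed for the downstream use.
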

%According to Chapter 7 of~\citet{cover1999elements}, the total capacity of independent channels is the summation of each channel's capacity.
By Proposition~\ref{prop:shannon} and the upper bound in Proposition~\ref{prop:H}, the required number of bits from the channel (the required number of samples)
\begin{equation}\nonumber%\label{equ:bound}
\begin{split}
T &\geq  \frac{1-H(\varepsilon)}{\cp} > \frac{n^2\cdot\left(1-\big(4\varepsilon(1-\varepsilon)\big)^{1/\ln4}\right)}{4\MOV^2}\\
&= \Omega\left(\frac{n^2\cdot\left(\frac{1}{2}-\varepsilon\right)^2}{\MOV^2}\right).
\end{split}
\end{equation}
Lemma~\ref{lem:classical_m2} follows by the observation that the time-complexity and the space-complexity of getting $T$ samples are $\Omega(T)$ and $\Omega(\log T)$ respectively.
\end{proof}

\subsection{Compare sampling with and without replacement}\label{app:with_witout}
Although Theorem~\ref{thm:classical} holds only for sampling with replacement algorithms, we believe that when the algorithm only uses the histogram of the sample votes to calculate the winner, and the sampled size $T$ is small compared to $n$, then there is no major difference for sampling without replacement algorithms, because two samplings will converge to the same distribution when $n$ goes to infinity. 
 
Let $\hist$ be the histogram for a profile $P$, and $\hist_j$ is the number of votes for $j$-th ranking in the profile. In the sampling with replacement, the number of votes for $j$-th ranking in the sample follows binomial distribution $\mathcal{B}(T, \hist_j / n)$.
For the sample without replacement, the number of votes for $j$-th ranking follows hypergeometric distribution $\mathcal{H}(n, \hist_j, T)$. (A hypergeometric distribution $\mathcal{H}(n, \hist_j, T)$ considers drawing $T$ samples from $n$ items, among which exactly $\hist_j$ items have a specific feature, and characterizes the probability that a certain number of featured items is sampled.) The following proposition tells us that hypergeometric distribution $H(n, \hist_j, T)$ converges to binomial distribution $\mathcal{B}(T, \hist_j / n)$ when $n\to \infty$. 

% \ao{Can we change the notation of hypergeometric distribution $H$ to $\calH$? $H$ is used an entropy in my proof. I think $\calH$ is not used. Can Qishen double-check?}

\begin{theorem}[Corollary 4.1 in~\citep{Teerapabolarn2011:pointwise}.]\label{thm:with_without}
    Let $X$ be a random variable that follows hypergeometric distribution $\mathcal{H}(n, \hist_j, T)$, and $Y$ be a random variable that follows binomial distribution $\mathcal{B}(T, \frac{\hist_j}{n})$. For any $t\in\{0,\cdots, T\}$, fixed $p = \frac{\hist_j}{n}$, and $T = o(\sqrt{n})$, $\lim_{n\to\infty} |P(X = t) - P(Y = t)| = 0$. 
\end{theorem}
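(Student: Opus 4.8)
The plan is to compare the two probability mass functions directly and show that their ratio tends to $1$, uniformly in $t$. Write the hypergeometric mass function as
\begin{equation}\nonumber
P(X = t) = \frac{\binom{\hist_j}{t}\binom{n-\hist_j}{T-t}}{\binom{n}{T}},
\end{equation}
and the binomial mass function, with $p=\hist_j/n$, as $P(Y=t)=\binom{T}{t}p^t(1-p)^{T-t}$. Since $p$ is fixed in $(0,1)$ and $T=o(\sqrt n)$, for all $n$ large enough we have $\hist_j=np\ge T$ and $n-\hist_j=n(1-p)\ge T$, so both distributions are non-degenerate on $\{0,\dots,T\}$; it suffices to treat such $n$.

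The first step is to expand each binomial coefficient into a product of linear factors and cancel the factorials (the common $\binom{T}{t}$ drops out of the ratio). Using $p^t(1-p)^{T-t}=\hist_j^{\,t}(n-\hist_j)^{T-t}/n^T$, a short computation gives
\begin{equation}\nonumber
\frac{P(X=t)}{P(Y=t)} = \prod_{i=0}^{t-1}\frac{\hist_j-i}{\hist_j}\cdot\prod_{r=0}^{T-t-1}\frac{n-\hist_j-r}{n-\hist_j}\cdot\prod_{\ell=0}^{T-1}\frac{n}{n-\ell}.
\end{equation}
Each factor has the form $1+\eta$, where $|\eta|\le\eta_{\max}:=\max\{T/(np),\,T/(n(1-p)),\,T/(n-T)\}$, because $i/\hist_j\le T/(np)$, $r/(n-\hist_j)\le T/(n(1-p))$, and $\ell/(n-\ell)\le T/(n-T)$. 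There are exactly $2T$ such factors, a count that does not depend on $t$.

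The key estimate is to control the logarithm of the ratio. Since $\eta_{\max}=O(T/n)\to 0$, for $n$ large we have $|\log(1+\eta)|\le 2|\eta|$ for every factor, so
\begin{equation}\nonumber
\left|\log\frac{P(X=t)}{P(Y=t)}\right|\le 2T\cdot 2\eta_{\max}=O\!\left(\frac{T^2}{n}\right).
\end{equation}
Because $T=o(\sqrt n)$ forces $T^2/n\to 0$, the ratio converges to $1$ uniformly over $t\in\{0,\dots,T\}$. Finally $|P(X=t)-P(Y=t)|=P(Y=t)\,\bigl|P(X=t)/P(Y=t)-1\bigr|\le\bigl|P(X=t)/P(Y=t)-1\bigr|\to 0$, which is the claim.

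The main obstacle is entirely quantitative: the product defining the ratio has a number of factors growing like $2T$, so one cannot argue factor-by-factor without tracking the accumulation. The hypothesis $T=o(\sqrt n)$ is precisely what makes the accumulated multiplicative error $T^2/n$ vanish, and the only care needed is to keep the bound uniform in $t$, including the boundary cases $t=0$ and $t=T$ where one of the first two products is empty.
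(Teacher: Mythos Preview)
Your argument is correct. The ratio factorisation is accurate, the bound $|\log(1+\eta)|\le 2|\eta|$ holds once $\eta_{\max}<1/2$ (which follows from $T/n\to 0$), and the total of $2T$ factors combined with $\eta_{\max}=O(T/n)$ gives exactly the $O(T^2/n)$ accumulated error that the hypothesis $T=o(\sqrt n)$ kills. The uniformity in $t$ is handled, and the boundary cases $t=0$, $t=T$ are covered by the convention that empty products equal~$1$.

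As for comparison with the paper: there is nothing to compare. The paper does not prove this statement at all; it simply quotes it as Corollary~4.1 of \citet{Teerapabolarn2011:pointwise} and uses it as a black box to argue informally that sampling with and without replacement behave the same when $T=o(\sqrt n)$. Your direct elementary proof is therefore strictly more than what the paper provides, and it makes the appendix self-contained without needing the external reference.
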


% \begin{proposition}
%     \label{prop:hypergeo}
%     Let $X$ be a random variable that follows hypergeometric distribution $H(N, p\cdot N, T)$, and $Y$ be a random variable that follows binomial distribution $\mathcal{B}(T, p)$. Suppose $T$ and $p \in (0, 1)$ are fixed. Then for all $k = 0,1,\cdots, T$, $\lim_{n\to\infty} P(X = k) = P(Y = k)$. 
% \end{proposition}
Therefore, when $n$ is large and sampling size $T$ is small compared to $n$, the sample histograms will be close to each other between sampling with and without replacement.

\section{Additional Experiments}
\subsection{Implementation Details}\label{app:detail_exp}
For the classical algorithm, we use MATLAB's built-in function  \texttt{mnrnd} to draw samples for $\hhist$ (follows multi-nominal distribution). For the quantum algorithm, we first calculate the distribution of quantum counting according to (5.26) in \citet{nielsen2002quantum} and then draw samples from the calculated distribution. For all experiments of this paper, we use $10^5$ independent trails to estimate $\prc$. All experiments of this paper are implemented through MATLAB 2022b and run on a Windows 11 desktop with AMD Ryzen 9 5900X CPU and 32GB RAM. 

\subsection{Additional experimental results}\label{app:exp}
\textbf{Plurality. } For plurality, we use the following profile $P$, \begin{equation}\nonumber\left\{
\begin{array}{l}
\frac{n+2(m!-1)\MOV}{m!} \text{ votes for } c_1\succ \cdots \succ c_m\\
\\
\frac{n-2\MOV}{m!} \text{ votes for each other type of votes}
\end{array}
\right..
\end{equation}
It's easy to check that the margin of victory of the above profile is $\MOV$ under plurality. Figure~\ref{fig:bor_m_4} plots the comparison between quantum-accelerated voting and classical voting for $m=4$. Similar acceleration as $m=2$ can be observed for $m=4$.

We also observe that that $\Pr[\text{correct}]$ may not monotonically increase with the increase of $\log_2(K\cdot 2^{\tone})$. \emph{e.g.,} for Figure~\ref{fig:plu_m_4}, $K=1$, and $\MOV=256$, the $\Pr[\text{correct}]$ for $s=15$ is smaller than $s=14$. The non-monotonicity is not an uncommon phenomenon in quantum algorithms (\emph{e.g.,} \citealp{kerenidis2019q,chen2020low,bausch2020recurrent}). This phenomenon comes from the discrete manner of quantum noises, which differs from the noise in classical sampling. We also note that our theoretical analysis bounds the asymptotic manner of $\Pr[\text{correct}]$, instead of the monotonicity. To be slightly more technical, this decrease comes from the noise (\emph{i.e.} tail probability) of the quantum counting, which is different from the classical counting noise. The tail probability of quantum counting also depends on the relative distance between the ground truth and its best $s$-bit estimation. The closer it is, the smaller the tail probability is. The relative distance may not monotonically decrease with the increase of $s$. For example, assume the ground truth of $\phi$ is $0.0001$ (in binary decimal). If using 1-bit estimation, the relative distance is $(0.0001-0.0)/0.1 = 1/8$. However, if using 3-bit estimation, the relative distance becomes $(0.0001-0.0)/0.001 = 1/2$, which is much larger than $1/8$.

\textbf{Borda.} For Borda, we let $d = \frac{4\MOV}{(m-2)!\cdot m}$ and set the profile as
\begin{equation}\nonumber
\left\{
\begin{array}{ll}
\frac{n+(m-1)d}{m!} \begin{array}{ll}\text{ votes for each type such that }\\
 ~c_1 \text{ is top-ranked}
 \end{array}\\
\\
\frac{n-d}{m!}  \text{ votes for each other type of votes}
\end{array}
\right..
\end{equation}
It's easy to check that the margin of victory of the above profile is $\MOV$ under Borda. Figure~\ref{fig:bor_m_4} plots the comparison between quantum-accelerated voting and classical voting for $m=4$. Similar behavior as plurality can be observed for Borda. 
\begin{figure*}[htp]
    \centering
    \includegraphics[width = 0.99\textwidth]{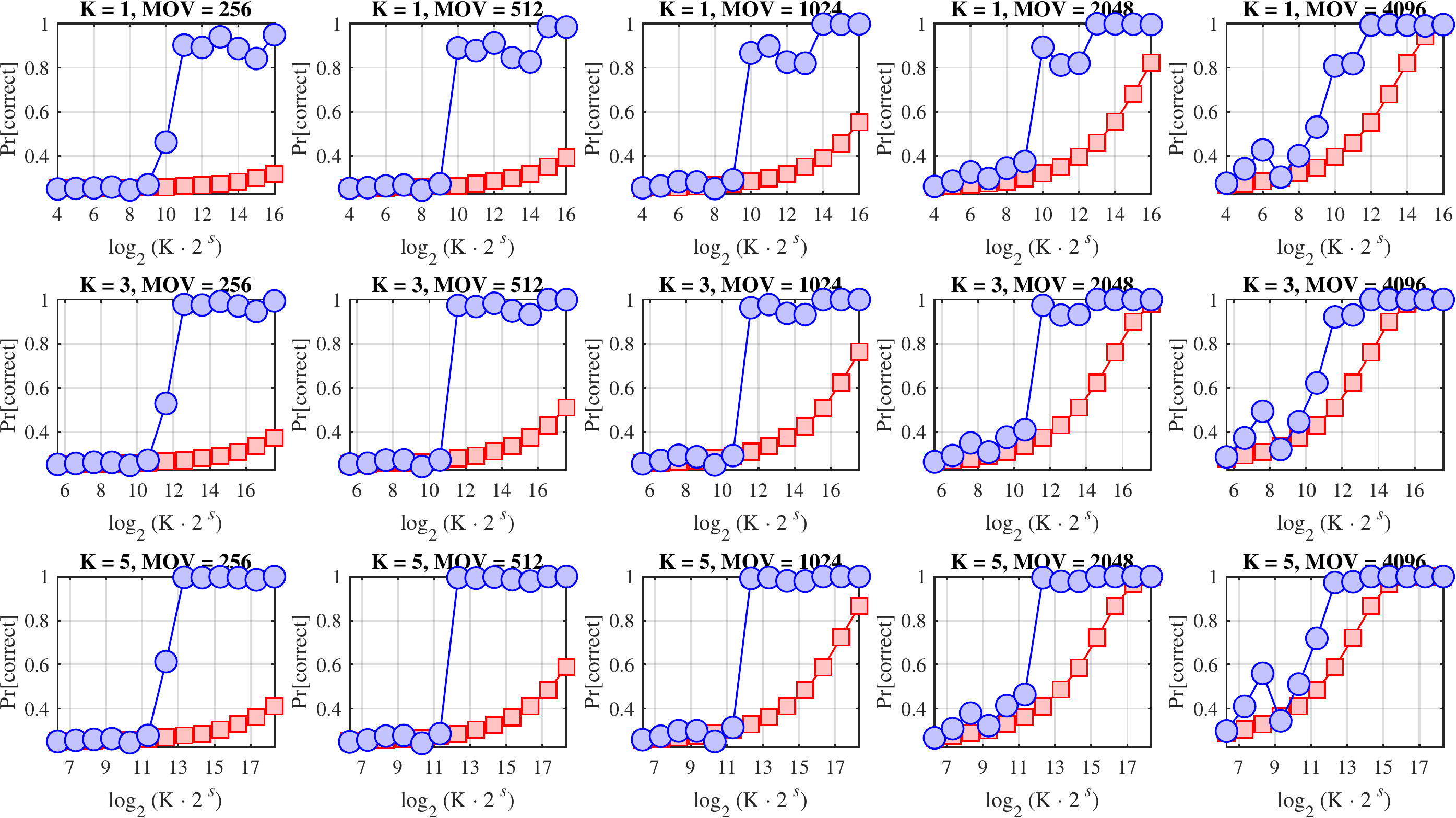}
    \caption{Compare quantum-accelerated voting (blue circles) with classical fast voting (red squares) for plurality when $m=4$. The horizontal axis can be seen as the logarithm of the algorithms' runtime.}
    \label{fig:plu_m_4}
\end{figure*}

\begin{figure*}[htp]
    \centering
    \includegraphics[width = 0.99\textwidth]{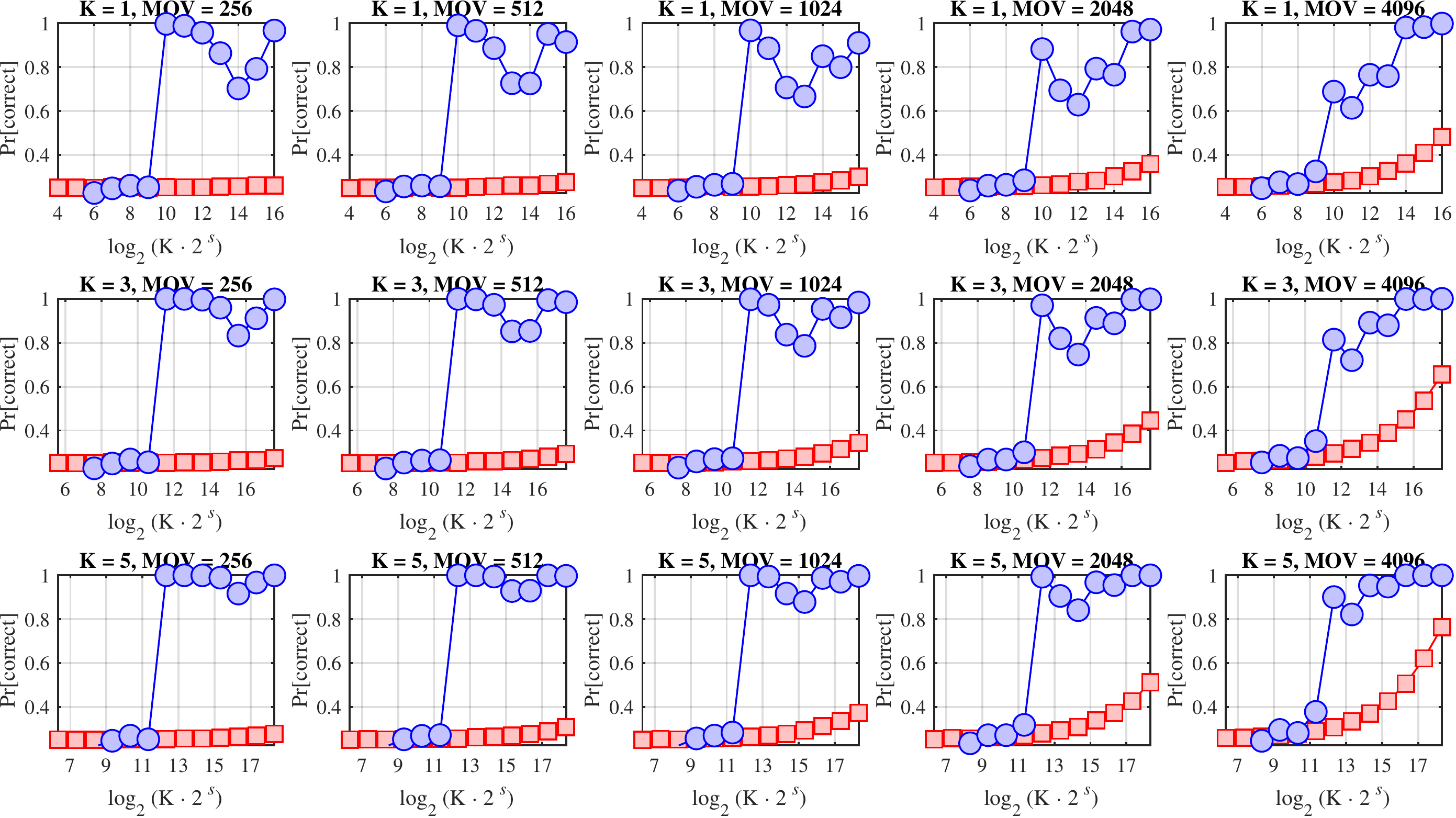}
    \caption{Compare quantum-accelerated voting (blue circles) with classical fast voting (red squares) for Borda when $m=4$. The horizontal axis can be seen as the logarithm of the algorithms' runtime.}
    \label{fig:bor_m_4}
\end{figure*}

\textbf{Copeland. }
For Copeland, we set the profile as
\begin{equation}\label{equ:cope}
\left\{
\begin{array}{ll}
\frac{n-2\MOV}{m!}+\frac{2\MOV}{(m-2)!} \begin{array}{ll}\text{votes for each type in the }\\
\text{form of } c_1\succ c_2\succ \text{others}
 \end{array}\\
\\
\frac{n-2\MOV}{m!}  \text{ votes for each other type of votes}
\end{array}
\right..
\end{equation}
It's easy to check that the margin of victory of the above profile is $\MOV$ under Copeland. Figure~\ref{fig:copeland} plot the comparison between quantum-accelerated voting and classical fast voting for $m=4$. Similar behavior as plurality can be observed for Copeland. 

\begin{figure*}[htp]
    \centering
    \includegraphics[width = 0.99\textwidth]{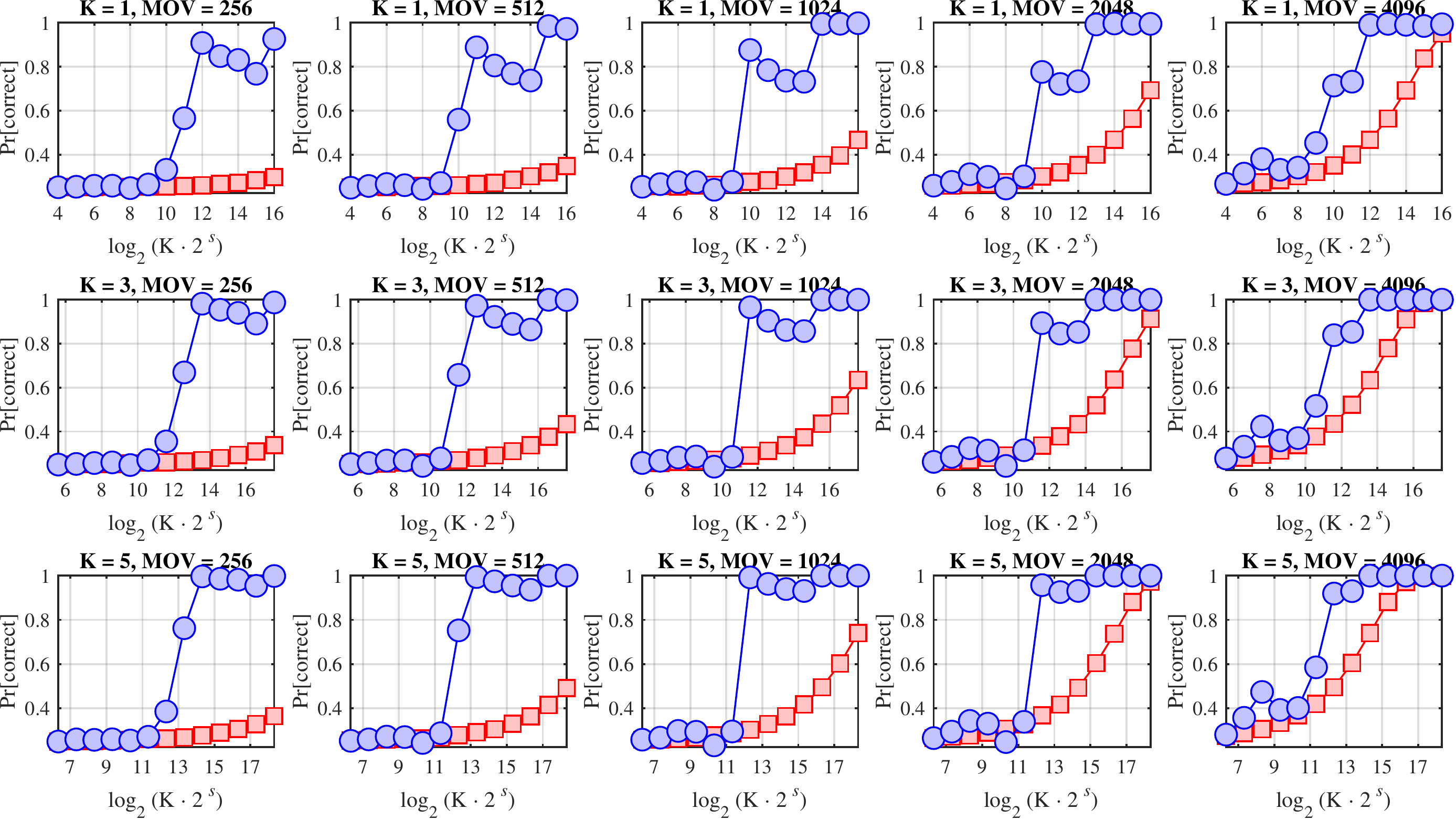}
    \caption{Compare quantum-accelerated voting (blue circles) with classical fast voting (red squares) for Copeland when $m=4$. The horizontal axis can be seen as the logarithm of the algorithms' runtime.}
    \label{fig:copeland}
\end{figure*}

\textbf{Single transferable vote (STV).}
For STV, the same profile as Copeland (see Equation (\ref{equ:cope})) is used. It's easy to check that the margin of victory of the profile is $\MOV$ under STV. Figure~\ref{fig:stv} plot the comparison between quantum-accelerated voting and classical voting for $m=4$. Similar behavior as plurality can be observed for STV. 

\begin{figure*}[htp]
    \centering
    \includegraphics[width = 0.99\textwidth]{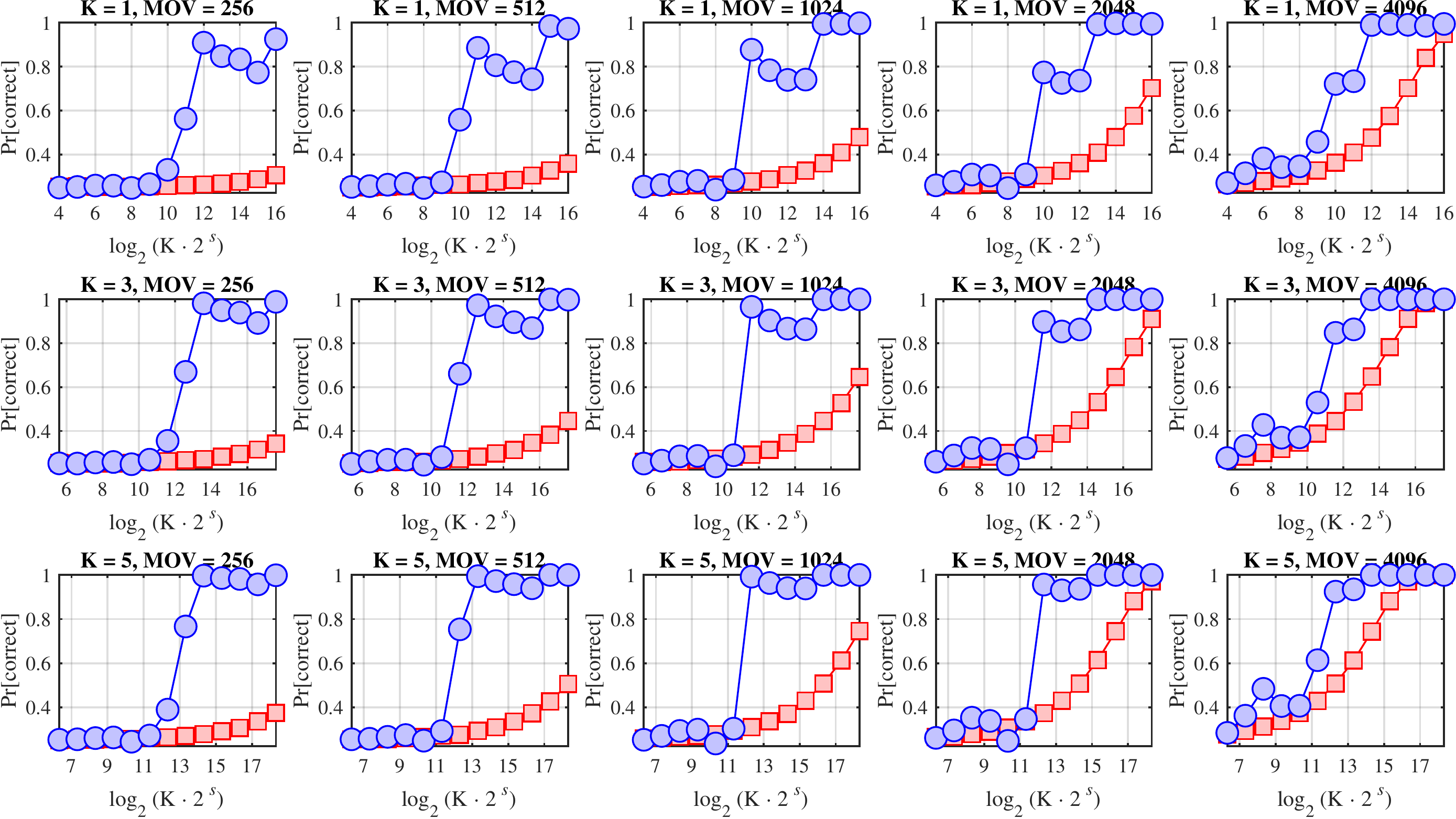}
    \caption{Compare quantum-accelerated voting (blue circles) with classical fast voting (red squares) for STV when $m=4$. The horizontal axis can be seen as the logarithm of the algorithms' runtime.}
    \label{fig:stv}
\end{figure*}

\textbf{Additional notes. } Since Copeland and STV shares the same profile, Figure~\ref{fig:copeland} and Figure~\ref{fig:stv} look similar. However, they are not the same and some small differences can be observed between the two figures. We also note that all four voting rules (plurality, Borda, Copeland, and STV) reduce to the majority voting when $m=2$. %The profiles of all four rules also become the same when $m=2$. We plot an additional figure for Borda because of its unique settings on $\MOV$.

%\subsection{Additional discussions about the experiments}\label{app:discuss}

\end{document}